\newcommand{\true}{\mathit{true}}
\newcommand{\false}{\mathit{false}}
\newcommand{\trueval}{\mathsf{true}}
\newcommand{\falseval}{\mathsf{false}}
\newcommand{\nats}{\mathbb{N}}
\newcommand{\natsbot}{\nats \cup \{\bot\}}
\newcommand\defeq{\ensuremath{\mathrel{\raisebox{-.3ex}{$\stackrel{\text{\tiny def}}=$}}}\xspace}
\newcommand{\bigo}[1]{\mathcal{O}(#1)}
\newcommand{\subf}[1]{\mathsf{subf}(#1)}
\newcommand{\trans}{\mathcal{T}}
\newcommand{\Traces}{\mathsf{Traces}}
\newcommand{\val}[2]{\mathit{val}(#1,#2)}
\newcommand{\autA}{\mathcal{A}}
\newcommand{\autB}{\mathcal{B}}
\newcommand{\autN}{\mathcal{N}}
\newcommand{\lang}[1]{\mathcal{L}(#1)}
\newcommand{\relaxfg}{\mathit{Relax}_{\tiny\fg}}
\newcommand{\rej}{\mathsf{rej}}
\newcommand{\Rej}{\mathit{Rej}}
\newcommand{\props}{\mathcal{P}}
\newcommand{\alphabet}{\Sigma}
\newcommand{\ialphabet}{{2^\mathcal{I}}}
\newcommand{\oalphabet}{{2^\mathcal{O}}}
\newcommand{\inpv}{\mathcal{I}}
\newcommand{\outv}{\mathcal{O}}
\newcommand{\inpval}{{\sigma_{I}}}
\newcommand{\outval}{{\sigma_{O}}}
\newcommand{\spec}{{\varphi}}
\newcommand{\softSpec}{{\varphi}}
\newcommand{\fg}{\LTLfinally\LTLglobally}
\newcommand{\gf}{\LTLglobally\LTLfinally}
\newcommand{\gphi}{{\LTLglobally\varphi}}
\newcommand{\gpsi}{{\LTLglobally\psi}}
\newcommand{\gphij}{{\LTLglobally\varphi_j}}
\newcommand{\fgphi}{{\LTLfinally\LTLglobally\varphi}}
\newcommand{\fgpsi}{{\LTLfinally\LTLglobally\psi}}
\newcommand{\gfphi}{{\LTLglobally\LTLfinally\varphi}}
\newcommand{\gfpsi}{{\LTLglobally\LTLfinally\psi}}
\newcommand{\Relax}{\mathit{Relax}}
\newcommand{\anot}{\lambda}
\newcommand{\anotfg}{\pi}
\newcommand{\anotb}{\lambda^{\mathbb{B}}}
\newcommand{\anotfgbj}{\pi^{\mathbb{B},j}}
\newcommand{\anotbj}{\lambda^{\mathbb{B},j}}
\newcommand{\anotbjk}{\lambda^{\mathbb{B},j,k}}
\newcommand{\anotbjl}{\lambda^{\mathbb{B},j,l}}
\newcommand{\anotn}{\lambda^{\mathbb{N}}}
\newcommand{\anotfgnj}{\pi^{\mathbb{N},j}}
\newcommand{\anotnj}{\lambda^{\mathbb{N},j}}
\newcommand{\fgvalid}{$\LTLfinally\LTLglobally$--valid}
\newcommand{\succa}{\mathsf{succ}}
\newcommand{\soft}{\mathit{Soft}}
\newcommand{\supplies}{\mathit{P}}
\newcommand{\loads}{\mathit{L}}
\newcommand{\slp}{s_{l \rightarrow p}}
\newcommand{\caps}{E^+}
\newcommand{\cons}{\mathit{Consumers}}
\newcommand{\supl}{\mathit{Suppliers}}
\newcommand{\office}{\mathit{office}}
\newcommand{\occupied}{\mathit{occupied}}
\newcommand{\passage}{\mathit{passage}}
\newcommand{\library}{\mathit{library}}
\newcommand{\ent}{\mathit{entrance}}
\newcommand{\corr}{\mathit{corridor}}
\newcommand{\exh}{\mathit{exhibition}}
\renewcommand{\fnum@figure}{Figure \thefigure}
\journalname{Acta Informatica}
\begin{document}

\title{Reactive Synthesis with Maximum Realizability of \\ Linear Temporal Logic Specifications 
	\thanks{Part of the results in this paper were
		presented at the Sixteenth International Symposium on Automated Technology for Verification and Analysis, Los Angeles, California, USA, October 2018 \cite{dimitrova2018maximum}.
}
}

\author{Rayna Dimitrova$^{\ast}$ \and
        Mahsa Ghasemi$^{\ast}$ \and \\
        Ufuk Topcu
        \thanks{$\ast$These authors contributed equally to the manuscript.}
}

\institute{Rayna Dimitrova \at
              University of Leicester, Leicester, UK \\
              \email{rd307@leicester.ac.uk}
           \and
           Mahsa Ghasemi \at
              University of Texas at Austin, Austin, Texas, USA \\
              \email{mahsa.ghasemi@utexas.edu}
           \and
           Ufuk Topcu \at
              University of Texas at Austin, Austin, Texas, USA \\
			  \email{utopcu@utexas.edu }           
}

\date{}

\maketitle

\begin{abstract}
A challenging problem for autonomous systems is to synthesize a reactive controller that conforms to a set of given correctness properties. Linear temporal logic (LTL) provides a formal language to specify the desired behavioral properties of systems. In applications in which the specifications originate from various aspects of the system design, or consist of a large set of formulas, the overall system specification may be unrealizable. Driven by this fact, we develop an optimization variant of synthesis from LTL formulas, where the goal is to design a controller that satisfies a set of hard specifications and minimally violates a set of soft specifications. To that end, we introduce a value function that, by exploiting the LTL semantics, quantifies the level of violation of properties. 
Inspired by the idea of bounded synthesis, we fix a bound on the implementation size and search for an implementation that is optimal with respect to the said value function. We propose a novel maximum satisfiability encoding of the search for an optimal implementation (within the given bound on the implementation size). We iteratively increase the bound on the implementation size until a termination criterion, such as a threshold over the value function, is met.

\keywords{Maximum realizability \and Linear temporal logic \and Bounded synthesis \and Maximum satisfiability}

\end{abstract}

\section{Introduction}\label{sec:intro}
In an ideal world, a user may specify a set of high-level behavioral characteristics for an autonomous system, and a controller (i.e., implementation) can be synthesized to comply with these specifications. Such automatic synthesis has been a topic for various studies in the domain of formal methods where the goal is to design a hardware or a software system that satisfies a set of formally defined properties. These properties can be formulated in an appropriate language, such as linear temporal logic (LTL)~\cite{pnueli1977temporal}. In conventional synthesis, either an implementation is constructed for a given specification, or the specification is identified as unrealizable. Nevertheless, especially in large systems, specifications may arise from different design perspectives, and if they consist of a large number of individual requirements, it is easy to encounter specifications that are unrealizable. In other scenarios, the user may have several alternative requirements in mind, potentially with some preferences, and want to know the best realizable combination of them with respect to some metric. Such cases usually lead to alternating between specification modification and synthesis procedure and hence, defeating the purpose of facilitating the design process. 

The possibility of conflict amongst the provided requirements calls for a more comprehensive synthesis procedure that, in the case of unrealizability, can generate an implementation that minimally violates the specifications. In order to define the notion of minimality, one requires a quantitative metric on the satisfaction of LTL formulas. The approach we pursue relies on multiple levels of relaxations of an LTL formula. We associate each level with a binary variable and form a value function that indicates the levels of relaxations of the formula that the implementation satisfies. The value function respects a lexicographic ordering according to preferences over relaxations. Having defined a value function, one can interpret maximum realizability of a set of LTL formulas as seeking an implementation that maximizes the corresponding value. 

In this paper, we consider settings in which the goal is to design a system that satisfies a given hard specification and maximizes the defined value function over a set of (potentially prioritized) soft specifications. We first focus on soft specifications that are safety formulas and later discuss the extension to general formulas. We quantify the compliance with a safety property according to the LTL semantics in a straightforward manner. The highest value is associated with satisfying the formula at all times, and it monotonically decreases if the formula is satisfied from some point on, then satisfied infinitely often, and lastly satisfied only for a finite number of times. Based on this ordering, we define the cumulative value of a conjunction of safety properties according to given priorities or design criteria.

The backbone of our approach toward maximum realizability is bounded synthesis, originally introduced by Schewe and Finkbeiner~\cite{ScheweF07a}. Bounded synthesis tackles the computational complexity of reactive synthesis from LTL properties by restricting the size of the search space. More specifically, it searches for a realizable implementation of the size up to a prespecified bound. If no such implementation exists, it increments the bound and repeats the search process. Each instance of bounded search for an implementation can be encoded as a SAT (or QBF, or SMT) problem~\cite{FaymonvilleFRT17}. The algorithm is complete as a theoretical bound on the maximum size of the implementation exists. 

We formulate maximum realizability as iterative MaxSAT solving~\cite{biere2009handbook}. In each iteration, we construct a MaxSAT instance that characterizes the existence of an implementation of size within the given bound that not only realizes the hard specification but is also optimal with respect to defined value function for soft specifications. We prove that, for any given finite set of soft specifications, there exists an optimal implementation with a bounded size. Consequently, the proposed algorithm that gradually increases the bound on the implementation size is complete. On the other hand, the theoretical upper bound on the minimal implementation size is generally impractically large. Therefore, in practice, we also settle for termination criteria such as a problem-specific bound on controller size, a limit on running time, or a desired threshold on the value function.

The proposed encoding of maximum realizability generates partial weighted MaxSAT instances. A partial weighted MaxSAT problem is composed of a set of hard clauses and a set of weighted soft clauses. The hard clauses capture the encodings imposed by bounded synthesis procedure for both hard and soft specifications. The soft clauses determine the level of relaxation of a soft specification that can be satisfied. We design the weights of soft clauses in a way that they correspond to the quantitative semantics of soft specifications. Therefore, adjusting the weights allows our approach to easily adapt to different design criteria.\looseness=-1

Recent advances in SAT solving along with the development of novel algorithms such as structural partitioning and search heuristics have made MaxSAT solviers a promising tool. MaxSAT formulations have been effective in solving many real-world problems including most probable explanation in Bayesian networks~\cite{park2002using}, package management~\cite{janota2012packup}, and correlation clustering and causal structure learning~\cite{berg2015applications}. While SAT solving has been an essential part of numerous formulations proposed for reactive synthesis problems, the applicability of MaxSAT solving has not yet been explored in this field. In this paper, we develop the first maximum realizability algorithm that utilizes the power of MaxSAT solvers to deal with the underlying combinatorial nature of the optimization task.

We evaluated the proposed maximum realizability procedure experimentally on reactive synthesis instances from two domains where considering combinations of hard and soft specifications is natural and often unavoidable. The first domain is robotic navigation, where due to the adversarial nature of the environment in which robots operate, safety requirements might prevent a system from achieving its goal, or a large number of tasks of different types might not necessarily be consistent when posed together.
The second domain relates to load distribution tasks in power networks. There, generators have limited capacity to power a set of vital and non-vital loads, whose total demand may exceed the capacity of the generators, thus leading to a combination of hard and soft specifications.

The rest of the manuscript is organized as follows. Section~\ref{sec:rel-work} discusses related work. Section~\ref{sec:background} recalls the necessary background on LTL synthesis, the bounded synthesis approach, and  maximum satisfiability. In Section~\ref{sec:prob}, we describe the proposed quantitative semantics for soft specifications and using this  semantics, formally state the maximum realizability problem. In Section~\ref{sec:maxsat-encoding}, we detail the proposed bounded maximum realizability algorithm along its encoding into MaxSAT instances. Section~\ref{sec:experiments} presents the experimental settings and the obtained results. Lastly, Section~\ref{sec:conclusion} states the concluding remarks and future directions.

This paper is an extension of the conference publication~\cite{dimitrova2018maximum}. It contains the complete proofs and  presents (Section~\ref{sec:generalizations}) a generalization of the maximum realizability problem to soft specifications in the full class of LTL formulas and to prioritized specifications.

\section{Related Work}\label{sec:rel-work}
Maximum realizability and several closely related problems have attracted significant attention in recent years. Tumova et al.~\cite{TumovaHKFR13} studied the problem of planning over a finite horizon with prioritized safety requirements, where the goal is to synthesize a least-violating control strategy. Kim et al.~\cite{KimFS15} studied a similar problem for the case of infinite-horizon temporal logic planning, which seeks to revise an inconsistent specification, minimizing the cost of revision with respect to costs for atomic propositions provided by the specifier. Lahijanian et al.~\cite{LahijanianAFKV15} describe a method for computing optimal plans for co-safe LTL specifications, where optimality is again with respect to the cost of violating each atomic proposition, which is provided by the user. All of these approaches are developed for the planning setting, where there is no adversarial environment, and thus they are able to reduce the problem to the computation of an optimal path in a graph. Lahijanian and Kwiatkowska~\cite{LahijanianK16} considered the case of probabilistic environments. In contrast, the proposed method seeks to maximize the satisfaction of the given specification against the worst-case behavior of the environment. Lahijanian et al.~\cite{LahijanianMFKKV16} studied the problem of partial satisfaction of guarantees in an unknown environment, where, unlike in our work, no relaxations of the soft specifications are considered, but simply the number of those that are satisfied is maximized.

The problem setting that is the closest to ours is that of Tomita et al.~\cite{Tomita2017}. There, the authors study a maximum realizability problem in which the specification is a conjunction of a \emph{must} (or \emph{hard}, in our terms) LTL specification, and a number of weighted \emph{desirable} (or \emph{soft}, in our terms) specifications of the form $\gphi$, where $\varphi$ is an arbitrary LTL formula. When $\varphi$  is not a safety property it is first strengthened to a safety formula before applying the synthesis procedure, which then weakens the result to a mean-payoff term. Thus, while Tomita et al. consider a broader class of soft specifications compared to those in this paper, when $\varphi$ is not a safety property there is no clear relationship between $\gphi$ and the resulting mean-payoff term. 
When applied to multiple soft specifications, the method by Tomita et al. combines the corresponding mean-payoff terms in a weighted sum, and synthesizes an implementation optimizing the value of this sum. Thus, without inspecting the synthesized implementation it is not possible to determine to what extent the individual desirable specifications are satisfied. In contrast, in the proposed maximum realizability procedure each satisfaction value is characterized as an LTL formula, which is useful for explainability and providing feedback to the  designer. 

To the best of our knowledge, our work is the first to employ MaxSAT in the context of reactive synthesis. MaxSAT has been used for preference-based planning~\cite{JumaHM12} and for computing optimal plans in propositional planning problems with action costs~\cite{robinson2010partial}. However, since maximum realizability is concerned with reactive systems, it requires a fundamentally different approach from planning.
 
Two other main research directions related to maximum realizability are \emph{quantitative synthesis} and \emph{specification debugging}. There are two predominant flavours of quantitative synthesis problems studied in the literature. In the first one (cf.~\cite{BloemCHJ09}), the goal is to generate an implementation that maximizes the value of a mean-payoff objective, while possibly satisfying some $\omega$-regular specification.
In the second setting (cf.~\cite{AlmagorBK16,TabuadaN16}), the system requirements are formalized in a multi-valued temporal logic. These synthesis methods~\cite{TabuadaN16,BloemCHJ09,AlmagorBK16}, however, do not directly solve the corresponding optimization problem, but instead check for the existence of an implementation whose value is in a given set. The optimization problem can then be reduced to a sequence of such queries. 

Alur et al.~\cite{AlurKW08} studied an optimal synthesis problem for an ordered sequence of prioritized $\omega$-regular properties, where the classical fixpoint-based game-solving algorithms are extended to a quantitative setting. The main difference in our work is that we allow for incomparable soft specifications each with a number of prioritized relaxations, for which the equivalent set of preference-ordered combinations would be of size exponential in the number of soft specifications. Our MaxSAT formulation avoids explicitly considering these combinations.

In specification debugging there is a lot of research dedicated to finding good explanations for the unsatisfiability or unrealizability of temporal logic specifications~\cite{CimattiRST07,Schuppan12,RamanK13}, and more generally to the analysis of specifications~\cite{CimattiRST08,EhlersR14}. Our approach to maximum realizability can prove useful for specification analysis, since instead of simply providing an optimal value, it computes an optimal relaxation of the given specification in the form of another LTL formula.

\section{Background}\label{sec:background}
We start by an overview of syntax and semantics of linear temporal logic (LTL) and  language-equivalent automata representation. Next, we define finite-state transition systems, and formally state the synthesis problem.
Then, we proceed to go over definitions of run graph and annotations to describe the bounded synthesis method and its SAT encoding.
Lastly, we provide a brief description of maximum satisfiability (MaxSAT) problem, particularly a class of that called partial weighted MaxSAT.

\subsection{Synthesis from LTL Specifications}\label{sec:def-synth}

Linear temporal logic (LTL) is a formal language for specifying behavioral characteristics of reactive systems. Formulas in LTL are constructed according to the following grammar:
\begin{equation*}
	\varphi := p \mid \true \mid \false \mid \neg \varphi \mid \varphi_1 \wedge \varphi_2 \mid \varphi_1 \vee \varphi_2 \mid \LTLnext  \varphi  \mid \varphi_1 \LTLuntil \varphi_2 \mid \varphi_1 \LTLrelease \varphi_2,
\end{equation*}
where $p \in \props$ is an atomic proposition. \emph{next} ($\LTLnext$), \emph{until} ($\LTLuntil$), and \emph{release} ($\LTLrelease$) are temporal operators. The \emph{finally} operator ($\LTLfinally$) is defined as $\LTLfinally \varphi \equiv \true \LTLuntil \varphi$ and the \emph{globally} operator ($\LTLglobally$) is defined as $\LTLglobally \varphi \equiv \false \LTLrelease \varphi$.
We denote the size of an LTL formula (that is, the number of operators in the formula) with $|\varphi|$, and the set of all its subformulas with $\subf\varphi$.
A negation normal form (NNF) of an LTL formula is a semantically-equivalent LTL formula in which negations appear only in front of atomic propositions. Without loss of generality, we consider LTL formulas in NNF.

Let $\Sigma = 2^{\props}$  denote the finite alphabet composed of all possible valuations of the propositions. A letter $\sigma \in \Sigma$ is interpreted as the valuation that assigns value $\trueval$ to all $p \in \sigma$ and $\falseval$ to all $p \in \props \setminus \sigma$. An infinite word $w \in\alphabet^\omega$ is an infinite sequence of letters. An LTL formula $\varphi$ defines a language over infinite words. A word is included in the language if it satisfies the formula, denoted by $w \models \varphi$. The full semantics of LTL can be found in~\cite{BaierKatoen08}.

A \emph{safety LTL formula} is an LTL formula such that every word not in its language has a bad prefix. Formally, $\varphi$ is a safety LTL formula if for each $w \not\models\varphi, w \in\alphabet^\omega$, there exists a bad prefix $u \in \alphabet^*$ such that $u\cdot v \not\models \varphi, \forall v\in\alphabet^\omega$. 
\emph{Syntactically safe} LTL formulas are a subclass of safety LTL formulas which  do not contain any occurrences of $\LTLuntil$ when written in NNF.

The language accepted by an LTL formula can equivalently be represented by a nondeterministic (or universal) B\"uchi (or co-B\"uchi) automaton.
A \emph{B\"uchi automaton} over a finite alphabet $\alphabet$ is a tuple $\autA = (Q,q_0,\delta,F)$, where $Q$ is a finite set of states, $q_0$ is the initial state, $\delta \subseteq Q \times \alphabet \times Q$ is the transition relation, and $F \subseteq Q$ is a subset of states. A run of $\autA$ on an infinite word $w=\sigma_0\sigma_1\ldots \in \alphabet^\omega$ is an infinite sequence $q_0,q_1,\ldots$ of states, where $q_0$ is the initial state and for every $i \geq 0$ it holds that $(q_i,\sigma_i,q_{i+1}) \in \delta$. 

A run of a B\"uchi automaton is accepting if it contains infinitely many occurrences of states in $F$. A \emph{co-B\"uchi automaton} $\autA = (Q,q_0,\delta,F)$ differs from a B\"uchi automaton in the accepting condition: a run of a co-B\"uchi automaton is accepting if it contains only \emph{finitely many} occurrences of states in $F$. For a B\"uchi automaton the states in $F$ are called \emph{accepting states}, while for a co-B\"uchi automaton they are called \emph{rejecting states}.
A \emph{nondeterministic} automaton $\autA$ accepts a word $w \in \alphabet^\omega$ if \emph{some} run of $\autA$ on $w$ is accepting.
A \emph{universal} automaton $\autA$ accepts a word $w \in \alphabet^\omega$ if \emph{every} run of $\autA$ on $w$ is accepting.

For a reactive system, the set of atomic propositions is $\props = \inpv \cup \outv$, where $\inpv$ and $\outv$ are disjoint sets, denoting \emph{input} propositions controlled by the environment and \emph{output} propositions controlled by the system, respectively.
A \emph{transition system} over a set of input propositions $\inpv$ and a set of output propositions $\outv$ is a tuple $\trans = (S,s_0,\tau)$, where $S$ is a set of states, $s_0$ is the initial state, and the transition function $\tau : S \times \ialphabet \to S \times \oalphabet$ maps a state $s$ and a valuation $\inpval \in \ialphabet$ of the input propositions to a successor state $s'$ and a valuation $\outval \in \oalphabet$ of the output propositions. For any letter $\sigma$, we consider the projection to input propositions by $\inpval \defeq \sigma \cap \inpv$ and to output propositions by $\outval \defeq \sigma \cap \outv$.
If the set $S$ is finite, $\trans$ is a finite-state transition system and its size is defined by $|\trans| \defeq |S|$. 

An \emph{execution} of $\trans$ is an infinite sequence $s_0, (\inpval_0 \cup \outval_0), s_1, (\inpval_1 \cup \outval_1), s_2\ldots$ such that $s_0$ is the initial state, and $(s_{i+1},\outval_i) = \tau(s_i,\inpval_i)$ for every $i \geq 0$. The corresponding sequence $(\inpval_0 \cup \outval_0),(\inpval_1 \cup \outval_1),\ldots \in \alphabet^\omega$ is called a trace. We denote with $\Traces(\trans)$ the set of all traces of a transition system $\trans$.
A transition system $\trans$ satisfies an LTL formula $\varphi$ over atomic propositions $\props = \inpv \cup \outv$, denoted by $\trans \models \varphi$, if for every $w \in \Traces(\trans)$, it holds that $w \models \varphi$.

The decision problem of determining whether there exists a transition system that satisfies an LTL formula is called the \emph{realizability problem for LTL}. If an LTL formula $\varphi$ is realizable, the goal of \emph{LTL synthesis problem} is to construct a transition system $\trans$ such that $\trans \models \varphi$.

\subsection{Bounded Synthesis Approach}\label{sec:def-boundedsynth}

The \emph{run graph} of a universal automaton $\autA = (Q,q_0,\delta,F)$ on a transition system $\trans = (S,s_0,\tau)$ is the unique graph $G = (V,E)$ with a set of nodes $V = S \times Q$ and a set of labeled edges $E \subseteq V \times \alphabet \times V$ such that $((s,q),\sigma,(s',q')) \in E$ iff $(q,\sigma,q') \in \delta$ and $\tau(s,\sigma\cap \inpv) = (s',\sigma\cap \outv)$.
That is, $G$ is the product of $\autA$ and $\trans$.

A run graph of a universal B\"uchi (resp.\ co-B\"uchi) automaton is accepting if every infinite path $(s_0,q_0),(s_1,q_1), \ldots$ contains infinitely (resp.\ finitely) many occurrences of states $q_i$ in $F$. A transition system $\trans$ is accepted by a universal automaton $\autA$ if the unique run graph of $\autA$ on $\trans$ is accepting. We denote with  $\lang\autA$ the set of transition systems accepted by $\autA$.\looseness=-1

The bounded synthesis approach is based on the fact that for every LTL formula $\varphi$ one can construct a universal co-B\"uchi automaton $\autA_\varphi$ with at most $2^{O(|\varphi|)}$ states such that $\trans \in \lang{\autA_\varphi}$ iff $\trans \models \varphi$, for every transition system $\trans$~\cite{KupfermanV05}.

An \emph{annotation} of a transition system $\trans = (S,s_0,\tau)$  with respect to a universal co-B\"uchi automaton $\autA = (Q,q_0,\delta,F)$ is a function $\anot : S \times Q \to \natsbot$ that maps nodes of the run graph of $\autA$ on $\trans$ to the set $\natsbot$. Intuitively, such an annotation is valid if every node $(s,q)$ that is reachable from the node $(s_0,q_0)$ is annotated with a natural number, which is an upper bound on the number of rejecting states visited on any path from $(s_0,q_0)$ to $(s,q)$. 
Formally, an annotation $\anot : S \times Q \to \natsbot$ is \emph{valid} if
\begin{compactitem}
	\item $\anot(s_0,q_0) \neq \bot$, i.e., the pair of initial states is labeled with a number, and 
	\item whenever $\anot(s,q) \neq \bot$, then for every edge $((s,q),\sigma,(s',q'))$ in the run graph of $\autA$ on $\trans$ we have that $(s',q')$ is annotated with a number (i.e., $\anot(s',q')\neq \bot$), such that
	$\anot(s',q') \geq  \anot(s,q)$, and if $q' \in F$, then $\anot(s',q') >  \anot(s,q)$.
\end{compactitem}
Valid annotations of finite-state transition systems correspond to accepting run graphs. An annotation $\anot$ is $c$-bounded if $\anot(s,q) \in \{0,\ldots,c\}\cup\{\bot\}$ for all $s \in S$ and $q \in Q$.

The synthesis method proposed in \cite{ScheweF07a,FinkbeinerS13} employs the following result in order to reduce the bounded synthesis problem to checking the satisfiability of propositional formulas: a transition system $\trans$ is accepted by a universal co-B\"uchi automaton $\autA = (Q,q_0,\delta,F)$ iff there exists a $(|\trans|\cdot|F|)$-bounded valid annotation for $\trans$ and $\autA$. One can estimate a bound on the size of the transition system, which allows to reduce the synthesis problem to its bounded version. Namely, if there exists a transition system that satisfies an LTL formula $\varphi$, then there exists a transition system satisfying $\varphi$ with at most $\big(2^{(|\subf\varphi| +\log |\varphi|)}\big)!^2$ states.

Let $\autA = (Q,q_0,\delta,F)$ be a universal co-B\"uchi automaton for the LTL formula $\spec$. Given a bound $b$ on the size of the desired transition system $\trans$, the bounded synthesis problem can be encoded as a satisfiability problem with the following sets of propositional variables and constraints.

{\bf Variables:} The variables represent the desired transition system $\trans$, and the desired valid annotation $\anot$ of the  run graph of $\autA$ on $\trans$. A transition system with $b$ states $S = \{1,\ldots,b\}$ is represented by Boolean variables $\tau_{s,\inpval,s'}$ and $o_{s,\inpval}$ for every $s, s' \in S$, $\inpval \in \ialphabet$, and $o \in \outv$. The variable $\tau_{s,\inpval,s'}$ encodes the existence of transition from $s$ to $s'$ on input $\inpval$, and the variable $o_{s,\inpval}$ encodes $o$ being true in the output from state $s$ on input $\inpval$.

The annotation $\anot$ is  represented by the following variables. For each $s\in S$ and $q \in Q$, there is a Boolean variable $\anotb_{s,q}$ and a vector $\anotn_{s,q}$ of $\log(b\cdot |F|)$ Boolean variables: the variable $\anotb_{s,q}$ encodes the reachability of $(s,q)$ from the initial node $(s_0,q_0)$ in the corresponding run graph, and the vector of variables $\anotn_{s,q}$ represents the bound for the node $(s,q)$. The constraints are as follows~\cite{FinkbeinerS13}.

{\bf Constraints for input-enabled $\trans$:}
$\quad C_\tau \defeq\bigwedge_{s \in S}\bigwedge_{\inpval \in \ialphabet}\bigvee_{s' \in S} \tau_{s,\inpval,s'}$.

{\bf Constraints for valid annotation:}\\
\begin{equation*}
	C_\anot  \defeq  
	\anotb_{s_0,q_0} \wedge 
	\bigwedge_{q,q' \in Q}\bigwedge_{s,s' \in S}\bigwedge_{\inpval \in \ialphabet}
	\Big( 
	\big(
	\anotb_{s,q} \wedge 
	\delta_{s,q,\inpval,q'} \wedge 
	\tau_{s,\inpval,s'}
	\big) \rightarrow 
	\succa_\anot(s,q,s',q')
	\Big),
\end{equation*}
where $\delta_{s,q,\inpval,q'}$ is a formula over the variables $o_{s,\inpval}$ that characterizes the transitions in $\autA$ between  $q$ and $q'$ on labels consistent with $\inpval$, and
$\succa_\anot(s,q,s',q')$ is a formula  over the annotation variables such that

\begin{equation*}
	\succa_\anot(s,q,s',q') \defeq 
	\begin{cases}
	\anotb_{s',q'} \wedge (\anotn_{s',q'} > \anotn_{s,q}) &\text{if } q' \in F,\\
	\anotb_{s',q'} \wedge (\anotn_{s',q'} \geq \anotn_{s,q}) &\text{if } q' \not\in F.
\end{cases}
\end{equation*}

\subsection{Maximum Satisfiability}\label{sec:def-maxsat}

The procedure proposed by Finkbeiner and Schewe~\cite{FinkbeinerS13} and recalled in the previous section provides a SAT encoding of synthesis when the size of the implementation is bounded. Maximum realizability is an optimization variant of synthesis while MaxSAT is an optimization variant of SAT. We show that, for a proposed value function, the maximum realizability problem under a bounded implementation size can be reduced to a \emph{partial weighted MaxSAT} instance.

Consider a propositional logic formula in conjunctive normal form (CNF), i.e., a formula that is a conjunction of disjunction of literals, where a literal is a Boolean variable or its negation and a disjunction of literals is called a clause.
\emph{MaxSAT} is the problem of assigning truth values to a set of Boolean variables such that the number of clauses of a propositional logic formula in CNF that are made true, is maximized \cite{biere2009handbook}. 
A \emph{partial MaxSAT} is a variant of MaxSAT problem where the clauses are categorized as hard and soft clauses. In this case, the goal is to find a truth assignment to the variables such that all the hard clauses are made true and the number of soft clauses that become true is maximized.
A more general problem is that of \emph{partial weighted MaxSAT} where each of the soft clauses is associated with a positive numerical weight. There, the objective is to find a truth assignment to the variables that not only makes all the hard clauses true but also maximizes the sum of the weights of the soft clauses that become true.

We exploit the separation of the hard and soft clauses in partial weighted MaxSAT to capture the hard and soft constraints that arise in the encoding of the maximum realizability problem. Furthermore, we design the weights of the soft clauses in a way to promote the quantitative objective associated with the conjunction of the given soft specifications.

\section{Maximum Realizability}\label{sec:prob}
Often, the specifications required from a system are a combination of multiple requirements, which might not be realizable in conjunction. In such a case, in addition to reporting the unrealizability to the system designer, we would like the synthesis procedure to construct an implementation that satisfies the specifications ``as much as possible''. Such implementation is particularly useful in the case where some of the requirements describe desirable but not necessarily essential properties of the system. To determine what ``as much as possible'' formally means, a quantitative semantics of the specification language is necessary. In the following, we provide such semantics for a fragment of LTL. The quantitative interpretation is based on the standard semantics of LTL formulas of the form $\gphi$.

\subsection{Quantitative Semantics of Soft Safety Specifications}\label{sec:quantitative-semantics}
Let $\gphi_1,\ldots,\gphi_n$ be a set of LTL specifications, where each $\softSpec_i$ is a safety LTL formula. In order to formalize the maximal satisfaction of $\gphi_1\wedge\ldots\wedge\gphi_n$, we first give a quantitative semantics of formulas of the form $\gphi$.

\paragraph{Quantitative semantics of safety specifications.} 
For an LTL formula of the form $\gphi$ and a transition system $\trans$, we define \emph{the value $\val\trans{\gphi}$ of $\gphi$ in $\trans$} as
$$
\val\trans\gphi \defeq
\begin{cases}
(1,1,1) & \text{if } \trans\models\LTLglobally\varphi,\\
(1,1,0) & \text{if } \trans\not\models\LTLglobally\varphi \text{ and } \trans\models\LTLfinally\LTLglobally\varphi,\\
(1,0,0) & \text{if } \trans\not\models\LTLglobally\varphi \text{ and } \trans\not\models\LTLfinally\LTLglobally\varphi \text{ and }\trans\models\LTLglobally\LTLfinally\varphi,\\
(0,0,0) & \text{if } \trans\not\models\LTLglobally\varphi \text{ and } \trans\not\models\LTLfinally\LTLglobally\varphi \text{ and }\trans\not\models\LTLglobally\LTLfinally\varphi.\\
\end{cases}
$$
Thus, the value of $\gphi$ in a transition system $\trans$ is a vector $(v_1,v_2,v_3) \in \{0,1\}^3$, where the value $(1,1,1)$ corresponds to the $\true$ value in the classical semantics of LTL. When $\trans\not\models\gphi$, the values $(1,1,0)$, $(1,0,0)$ and $(0,0,0)$ capture the extent to which $\varphi$ holds or not along the traces of $\trans$. For example, if $\val\trans\gphi =(1,0,0)$, then $\varphi$ holds infinitely often on each trace of $\trans$, but there exists a trace of $\trans$ on which $\varphi$ is violated infinitely often. When $\val\trans\gphi =(0,0,0)$, then on some trace of $\trans$, $\varphi$ holds for at most finitely many positions.
Note that by the definition of $\mathit{val}$, if $\val\trans\gphi = (v_1,v_2,v_3)$, then
\begin{itemize}
	\item $v_1=1$ if and only if $\trans\models \gfphi$,
	\item $v_2=1$ if and only if $\trans\models \fgphi$, 
	\item $v_3=1$ if and only if $\trans\models \gphi$.
\end{itemize}  
Thus, the lexicographic ordering on $\{0,1\}^3$ captures the preference of one transition system over another with respect to the  quantitative satisfaction of $\gphi$.

\begin{example}\label{ex:one-safety}
	Suppose that we want to synthesize a transition system representing a navigation strategy for a robot working at a restaurant. We require that the robot  serves the VIP area  infinitely often, formalized in LTL as $\LTLglobally\LTLfinally \mathit{vip\_area}$. We also desire that the robot never enters the staff's office, formalized as $\LTLglobally\neg\office$. Now, suppose that initially the key to the VIP area is in the office. Thus, in order to satisfy $\LTLglobally\LTLfinally \mathit{vip\_area}$, the robot must violate $\LTLglobally\neg\office$. A strategy in which the office is entered only once, and satisfies $\fg \neg\office$, is preferable to one which enters the office over and over again, and only satisfies $\gf\neg\office$.
	Thus, we want to synthesize a strategy $\trans$  maximizing $\val\trans{\LTLglobally\neg\office}$.
\end{example}

In order to compare implementations with respect to their satisfaction of a conjunction of several safety specifications $\gphi_1 \wedge \ldots \wedge \gphi_n$, we will extend the above definition. We first consider the case where the specifier has not expressed any preference for the individual conjuncts and later on, extend that to the case with a given priority ordering. Consider the following example.

\begin{example}\label{ex:two-safety} 
	We consider again the restaurant robot, now with two soft specifications. The soft specification $\LTLglobally (\mathit{req1}\rightarrow \LTLnext\mathit{table1})$ requires that each request by table  1 is served immediately at the next time instance. Similarly, $\LTLglobally (\mathit{req2}\rightarrow \LTLnext\mathit{table2})$, requires the same for table number 2. Since the robot cannot be at both tables simultaneously, formalized as the hard specification $\LTLglobally (\neg\mathit{table1} \vee\neg\mathit{table2})$, the conjunction of these requirements is unrealizable. Unless the two tables have priorities, it is preferable to satisfy each of $\mathit{req1}\rightarrow \LTLnext\mathit{table1}$ and $\mathit{req2}\rightarrow \LTLnext\mathit{table2}$ infinitely often, rather than serve one and the same table all the time. 
\end{example}

\paragraph{Quantitative semantics of conjunctions.} 
To capture the idea illustrated in Example~\ref{ex:two-safety}, we define a value function, which intuitively 
gives higher values to transition systems in which a fewer number of soft specifications have low values. Formally, let \emph{the value of $\gphi_1 \wedge \ldots \wedge \gphi_n$ in $\trans$} be
\[\val\trans{\gphi_1 \wedge \ldots \wedge \gphi_n} \defeq \big(
\sum_{i=1}^n v_{i,1},
\sum_{i=1}^n v_{i,2},
\sum_{i=1}^n v_{i,3}
\big),\] 
where
$\val\trans{\gphi_i} = (v_{i,1},v_{i,2},v_{i,3})$ for $i \in \{1,\ldots,n\}$. To compare transition systems according to these values, we use lexicographic ordering on $\{0,\ldots,n\}^3$.

\begin{example}\label{ex:two-safety-val}  
	For the specifications in Example~\ref{ex:two-safety}, the defined value function assigns value $(2,0,0)$ to a  system satisfying  $\gf(\mathit{req1}\rightarrow \LTLnext\mathit{table1})$ and $\gf (\mathit{req2}\rightarrow \LTLnext\mathit{table2})$, but  neither of $\fg  (\mathit{req1}\rightarrow \LTLnext\mathit{table1})$ and 
	$\fg (\mathit{req2}\rightarrow \LTLnext\mathit{table2})$. It assigns the smaller value 
	$(1,1,1)$
	to an implementation that gives priority to table 1 and satisfies $\LTLglobally (\mathit{req1}\rightarrow \LTLnext\mathit{table1})$ but not $\gf(\mathit{req2}\rightarrow \LTLnext\mathit{table2})$.
\end{example}

According to the definition above, a transition system that satisfies all soft requirements to some extent is considered better in the lexicographic ordering than a transition system that satisfies one of them exactly and violates all the others. We could instead inverse the order of the sums in the triple, thus giving preference to satisfying some soft specification exactly, over having some lower level of satisfaction over all of them. The next example illustrates the differences between the two variations.

\begin{example}\label{ex:ordering}
	For the two soft specifications from Example~\ref{ex:two-safety}, reversing the order of the sums in the definition of $\val\trans{\gphi_1 \wedge \ldots \wedge \gphi_n}$ results in giving the higher value $(1,1,1)$ to a transition system that satisfies $\LTLglobally (\mathit{req1}\rightarrow \LTLnext\mathit{table1})$ but not $\gf(\mathit{req2}\rightarrow \LTLnext\mathit{table2})$, and the lower value $(0,0,2)$ to the one that only guarantees $\gf(\mathit{req1}\rightarrow \LTLnext\mathit{table1})$ and $\gf (\mathit{req2}\rightarrow \LTLnext\mathit{table2})$. The most suitable ordering usually depends on the specific application. 
\end{example}

\subsection{Problem Formulation}\label{sec:prob-form}
Using the definition of quantitative satisfaction of soft safety specifications, we now define the maximum realizability problem, which asks to synthesize a transition system that satisfies a given \emph{hard} LTL specification, and is optimal with respect to the satisfaction of a conjunction of \emph{soft} safety specifications.

{\bf Maximum realizability problem:} Given an LTL formula $\spec$ and formulas $\gphi_1,\ldots,\gphi_n$, where each $\softSpec_i$ is a safety LTL formula, the maximum realizability problem asks to determine if there exists a transition system $\trans$ such that $\trans \models \spec$, and if the answer is positive, to synthesize a transition system $\trans$ such that $\trans \models \spec$, and for every transition system $\trans'$ with $\trans'\models \spec$ it holds that $\val\trans{\gphi_1 \wedge \ldots \wedge \gphi_n} \geq \val{\trans'}{\gphi_1 \wedge \ldots \wedge \gphi_n}$.

{\bf Bounded maximum realizability problem:} Given an LTL formula $\spec$ and formulas $\gphi_1,\ldots,\gphi_n$, where each $\softSpec_i$ is a safety LTL formula, and a bound $b\in \nats_{>0}$, the bounded maximum realizability problem asks to determine if there exists a transition system $\trans$ with $|\trans| \leq b$ such that $\trans \models \spec$, and if the answer is positive, to synthesize a transition system $\trans$ such that $\trans \models \spec$, $|\trans| \leq b$ and for every transition system $\trans'$ with $\trans'\models \spec$ and $|\trans'| \leq b$, it holds that $\val\trans{\gphi_1 \wedge \ldots \wedge \gphi_n} \geq \val{\trans'}{\gphi_1 \wedge \ldots \wedge \gphi_n}$.

\section{Maximum Realizability as Iterative MaxSAT Solving}\label{sec:maxsat-encoding}
We now describe the proposed MaxSAT-based approach to maximum realizability. First, we establish an upper bound on the minimal size of an implementation that satisfies a given LTL specification $\varphi$ and maximizes the satisfaction of a conjunction of the soft specifications $\gphi_1, \ldots,\gphi_n$, according to the value function defined in Section~\ref{sec:quantitative-semantics}.
This bound can be used to reduce the maximum realizability problem to its bounded version, which we encode as a MaxSAT problem.

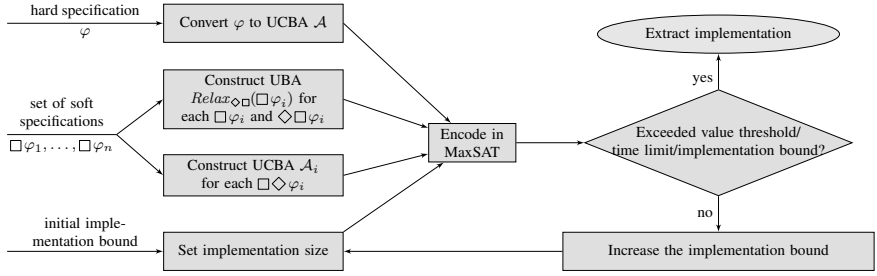
\begin{figure}[t]
	\centering
	\scalebox{.72}{
	\tikzstyle{operation} = [rectangle, draw, fill=gray!25,
text width=11em, text centered,  minimum height=2.5em]
\tikzstyle{condition} = [diamond, aspect=2.5, draw, fill=gray!25, 
text width=21em, text centered, node distance=3cm, inner sep=-10pt]
\tikzstyle{terminal} = [draw, ellipse, fill=gray!20, node distance=3cm, minimum height=2.5em, text width=11em, text centered, inner sep=0.5pt, node distance=2cm]
\tikzstyle{line} = [draw, -latex']

\begin{tikzpicture}[node distance = 1.5cm, auto]
\tikzstyle{every node}=[font=\footnotesize]

\node [operation] (ucba_h) {Convert $\varphi$ to UCBA $\autA$};
\node [operation, below of=ucba_h,yshift=.1cm] (uba_s) {Construct UBA $\relaxfg(\gphi_i)$ for each $\LTLglobally \varphi_i$ and $\LTLfinally \LTLglobally \varphi_i$};
\node [operation, below of=uba_s,yshift=.1cm] (ucba_s) {Construct UCBA $\autA_i$ for each $\LTLglobally \LTLfinally \varphi_i$};
\node [operation, below of=ucba_s,yshift=.1cm] (set_b) {Set implementation size};
\node [operation, right of=uba_s, node distance=4cm,yshift=-.8cm,text width=5em] (maxsat) {Encode in MaxSAT};
\node [condition, right of=maxsat, node distance=4.5cm] (term_cond) {Exceeded value threshold/\\time limit/implementation bound?};
\node [terminal, above of=term_cond, node distance=2cm] (extract) {Extract implementation};
\node [operation, below of=term_cond, node distance=2cm,text width=5.5cm] (inc_b) {Increase the implementation bound};

\path [line] (ucba_h.east) -- (maxsat);
\path [line] (uba_s.east) -- (maxsat);
\path [line] (ucba_s.east) -- (maxsat);
\path [line] (maxsat) -- (term_cond);
\path [line, -] +(-4.5cm,-2.05cm) -- node[above, text centered, text width=2cm] {set of soft specifications} node[below] {$\LTLglobally\varphi_1, \ldots, \LTLglobally\varphi_n$} +(-2.5cm,-2.05cm);
\path [line] +(-4.5cm,0cm) -- node[above] {hard specification} node[below] {$\varphi$} (ucba_h);
\path [line] +(-2.5cm,-2.05cm) -- (uba_s.west);
\path [line] +(-2.5cm,-2.05cm) -- (ucba_s.west);
\path [line] +(-4.5cm,-4.2cm) -- node[above, text centered, text width=3cm] {initial implementation bound}(set_b);
\path [line] (set_b.north east) -- (maxsat);
\path [line] (term_cond) -- node[left,near start] {yes} (extract);
\path [line] (term_cond) -- node[left] {no} (inc_b);
\path [line] (inc_b) -- (set_b);

\end{tikzpicture}
	}
	\caption{Outline of the maximum realizability procedure.}
	\label{fig:proc_diag}
\end{figure}

\subsection{Bounded Maximum Realizability}
To establish an upper bound on the minimal (in terms of size) optimal implementation, we make use of an important property of the function $\mathit{val}$ defined in Section~\ref{sec:quantitative-semantics}. Namely, the property that for each of the possible values of $\gphi_1 \wedge\ldots \wedge\gphi_n$  there is a corresponding LTL formula that encodes this value in the classical LTL semantics, as we formally state in the next lemma.

\begin{lemma}\label{lem:value-as-ltl}
For every transition system $\trans$ and soft safety specifications $\gphi_1,\ldots,$ $\gphi_n$, if $\val\trans{\gphi_1 \wedge \ldots \wedge \gphi_n} = v$, then there exists an LTL formula $\psi_v$ where
\begin{compactitem}
\item[(1)] $\psi_v = \softSpec_1'\wedge\ldots\wedge\softSpec_n'$, such that $\softSpec_i' \in\{\gphi_i,\fgphi_i,\gfphi_i,\true\} \text{ for }i=1,\ldots,n$, 
\item[(2)] $\trans \models \psi_v$, and for every $\trans'$, if $\trans' \models \psi_v$, then $\val{\trans'}{\gphi_1 \wedge \ldots \wedge \gphi_n} \geq v$.
\end{compactitem}
\end{lemma}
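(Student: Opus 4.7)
The plan is to construct $\psi_v$ explicitly, one conjunct per $\softSpec_i$, by choosing for each $i$ the strongest formula in $\{\gphi_i,\fgphi_i,\gfphi_i,\true\}$ that is still entailed by $\trans$. Concretely, I would use the three equivalences highlighted in the excerpt right after the definition of $\mathit{val}$ (namely, $v_1=1\Leftrightarrow\trans\models\gfphi$, $v_2=1\Leftrightarrow\trans\models\fgphi$, $v_3=1\Leftrightarrow\trans\models\gphi$) to observe that the only triples achievable by a single $\gphi_i$ are $(0,0,0)$, $(1,0,0)$, $(1,1,0)$, $(1,1,1)$, and to assign
\[
\softSpec_i' \;\defeq\; \begin{cases} \gphi_i & \text{if } \val{\trans}{\gphi_i}=(1,1,1),\\ \fgphi_i & \text{if } \val{\trans}{\gphi_i}=(1,1,0),\\ \gfphi_i & \text{if } \val{\trans}{\gphi_i}=(1,0,0),\\ \true & \text{if } \val{\trans}{\gphi_i}=(0,0,0). \end{cases}
\]
Then $\psi_v \defeq \softSpec_1'\wedge\ldots\wedge\softSpec_n'$ meets item~(1) by construction.

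For item~(2) I would invoke the same equivalences in the forward direction: in each of the four cases above, $\trans$ models the chosen $\softSpec_i'$ (for instance, in case $(1,1,0)$ the equivalence $v_{i,2}=1\Leftrightarrow\trans\models\fgphi_i$ gives $\trans\models\fgphi_i=\softSpec_i'$), so $\trans\models\psi_v$.

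For item~(3), take any $\trans'$ with $\trans'\models\psi_v$, so $\trans'\models\softSpec_i'$ for every $i$. I would then argue, one case at a time, that $\val{\trans'}{\gphi_i}\geq \val{\trans}{\gphi_i}$ componentwise: e.g.\ if $\softSpec_i'=\fgphi_i$, then $\trans'\models\fgphi_i$ entails $\trans'\models\gfphi_i$ as well, so $\val{\trans'}{\gphi_i}\in\{(1,1,0),(1,1,1)\}$, each of which dominates $(1,1,0)=\val{\trans}{\gphi_i}$ componentwise; the other three cases are analogous (with the $\true$ case being trivial). Summing these componentwise inequalities over $i$ yields $\val{\trans'}{\gphi_1\wedge\ldots\wedge\gphi_n}\geq v$ componentwise, and since componentwise dominance on $\{0,\ldots,n\}^3$ implies dominance in the lexicographic order used to compare values, property~(2) of the lemma follows.

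The main obstacle is essentially bookkeeping rather than mathematical: I need to be careful that the four-case definition of $\softSpec_i'$ is exhaustive (which relies on the monotonicity $v_{i,1}\geq v_{i,2}\geq v_{i,3}$, itself a consequence of the implications $\gphi\Rightarrow\fgphi\Rightarrow\gfphi$), and that the final step from componentwise dominance to lexicographic dominance of the summed triples is made explicit. No nontrivial automaton-theoretic or synthesis arguments are needed; the proof is a definitional unpacking that leverages only the equivalences already recorded right after the definition of $\val{\trans}{\gphi}$.
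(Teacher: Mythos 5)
Your proposal is correct and follows essentially the same route as the paper: the same case-based choice of the strongest conjunct $\softSpec_i'$ entailed by $\trans$ (the paper phrases the cases via the bits $v_{i,3},v_{i,2},v_{i,1}$ rather than the four achievable triples, which is equivalent), the same use of the recorded equivalences to get $\trans\models\psi_v$, and the same componentwise-then-lexicographic argument for the optimality claim. The bookkeeping points you flag (exhaustiveness of the cases via $v_{i,1}\geq v_{i,2}\geq v_{i,3}$, and passing from componentwise to lexicographic dominance of the summed triples) are exactly the steps the paper's proof also relies on.
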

\begin{proof}
For each $i \in \{1,\ldots,n\}$, let $(v_{i,1},v_{i,2},v_{i,3}) = \val\trans{\gphi_i}$, and let
\[\psi_v^i \defeq
\begin{cases}
\gphi_i & \text{if } v_{i,3} = 1,\\
\fgphi_i & \text{if } v_{i,3} = 0 \text{ and } v_{i,2} = 1,\\
\gfphi_i  & \text{if } v_{i,2} = 0 \text{ and } v_{i,1} = 1,\\
\true & \text{if } v_{i,1} = 0.\\
\end{cases}
\]

We define $\psi_v  = \bigwedge_{i=1}^n\psi_v^i$. By the definition of $\val\trans{\gphi_i}$ and $\psi_v^i$, we have that $\trans \models \psi_v^i$ for all $i\in \{1,\ldots,n\}$. Thus, we  conclude that $\trans \models \psi_v$. Clearly, $\psi_v$ also satisfies condition \emph{(1)}. Now, consider a transition system $\trans'$ with $\trans' \models \psi_v$.

Let $(v_1',v_2',v_3') = \val{\trans'}{\gphi_1 \wedge \ldots \wedge \gphi_n}$. We will show that $v_1' \geq v_1$, $v_2' \geq v_2$ and $v_3' \geq v_3$, where $(v_1,v_2,v_3) = v$.  Fix some $i \in\{1,\ldots,n\}$.

Let $(v_{i,1}',v_{i,2}',v_{i,3}') = \val{\trans'}{\gphi_i}$. Since $\trans' \models \psi_v$ we have that $\trans' \models \psi_v^i$. Thus by the definition of $\psi_v^i$, we have that if $v_{i,3} = 1$, then $\trans' \models \gphi_i$, and thus $v_{i,3}'=1$. Similarly, if $v_{i,2} = 1$ we can conclude that $v_{i,2}' = 1$, and if  
$v_{i,1} = 1$, then we have $v_{i,1}' = 1$. Since $i \in \{1,\ldots,n\}$ was arbitrary, and since 
\[
\begin{array}{lll}
(v_1,v_2,v_3) & = &  \big(
\sum_{i=1}^n v_{i,1},
\sum_{i=1}^n v_{i,2},
\sum_{i=1}^n v_{i,3}
\big)\text{ and }\\
(v_1',v_2',v_3') & = &  \big(
\sum_{i=1}^n v_{i,1}',
\sum_{i=1}^n v_{i,2}',
\sum_{i=1}^n v_{i,3}'
\big),
\end{array}
\]
we can conclude that $v_1' \geq v_1$, $v_2' \geq v_2$ and $v_3' \geq v_3$. This implies that $(v_1',v_2',v_3') \geq (v_1,v_2,v_3)$ also according to the lexicographic ordering, which proves \emph{(2)}.\qed
\end{proof}

The following theorem is a consequence of Lemma~\ref{lem:value-as-ltl}. 

\begin{theorem}\label{thm:optimal-bound-safety}
Given an LTL specification $\spec$ and soft safety specifications $\gphi_1,\ldots,$ $\gphi_n$, 
if there exists a transition system $\trans \models \spec$, then there exists  $\trans^*$ such that
\begin{compactitem}
\item [(1)] $\val{\trans^*}{\gphi_1 \wedge \ldots \wedge \gphi_n} \geq\val{\trans}{\gphi_1 \wedge \ldots \wedge \gphi_n}$ for all $\trans$ with $\trans \models \spec$,
\item[(2)] $\trans^* \models \spec$ and $|\trans^*| \leq \left((2^{(b+\log b)})!\right)^2$,
\end{compactitem}
 where $b = \max\{|\subf{\spec\wedge\softSpec_1'\wedge\ldots\wedge\softSpec_n'}| \mid \forall i:\ \softSpec_i' \in\{\gphi_i,\fgphi_i,\gfphi_i\}\}$.
\end{theorem}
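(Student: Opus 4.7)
My plan is to combine Lemma~\ref{lem:value-as-ltl} with the classical bounded synthesis size bound recalled at the end of Section~\ref{sec:def-boundedsynth}, by picking an optimal achievable value, turning it into an LTL formula, and then conjoining it with $\spec$ before invoking the bound.

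First, since each $\val\trans{\gphi_1\wedge\ldots\wedge\gphi_n}$ lies in the finite set $\{0,\ldots,n\}^3$, the set
\[
V \defeq \{\val\trans{\gphi_1\wedge\ldots\wedge\gphi_n} \mid \trans \models \spec\}
\]
is non-empty (by the hypothesis that some $\trans\models\spec$ exists) and finite, so it has a lexicographic maximum $v^*$. Fix some $\trans_0 \models \spec$ achieving $\val{\trans_0}{\gphi_1\wedge\ldots\wedge\gphi_n} = v^*$, and apply Lemma~\ref{lem:value-as-ltl} to obtain an LTL formula $\psi_{v^*} = \softSpec_1' \wedge \ldots \wedge \softSpec_n'$ with each $\softSpec_i' \in \{\gphi_i,\fgphi_i,\gfphi_i,\true\}$ such that $\trans_0 \models \psi_{v^*}$ and every $\trans' \models \psi_{v^*}$ satisfies $\val{\trans'}{\gphi_1\wedge\ldots\wedge\gphi_n} \geq v^*$.

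Next, observe that $\trans_0 \models \spec \wedge \psi_{v^*}$, so this conjunction is realizable. The bounded synthesis size bound from Section~\ref{sec:def-boundedsynth} then produces a transition system $\trans^*$ with $\trans^* \models \spec \wedge \psi_{v^*}$ of size at most $\left((2^{|\subf{\spec\wedge\psi_{v^*}}| + \log |\spec\wedge\psi_{v^*}|})!\right)^2$. To match the stated bound, I would observe that since each $\softSpec_i'$ belongs to $\{\gphi_i,\fgphi_i,\gfphi_i,\true\}$, replacing any $\true$-conjunct by $\gphi_i$ (or $\fgphi_i$, $\gfphi_i$) can only increase the number of subformulas of $\spec\wedge\psi_{v^*}$; consequently both $|\subf{\spec\wedge\psi_{v^*}}|$ and $|\spec\wedge\psi_{v^*}|$ are bounded by the quantity $b$ defined in the theorem statement (using $|\varphi| \leq |\subf\varphi|$ for the second term). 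This yields $|\trans^*| \leq \left((2^{b+\log b})!\right)^2$, establishing part (2).

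It remains to verify part (1): since $\trans^* \models \psi_{v^*}$, Lemma~\ref{lem:value-as-ltl}(2) gives $\val{\trans^*}{\gphi_1\wedge\ldots\wedge\gphi_n} \geq v^*$, and by the choice of $v^*$ as the lexicographic maximum of $V$, for every $\trans\models\spec$ we have $\val{\trans}{\gphi_1\wedge\ldots\wedge\gphi_n} \leq v^* \leq \val{\trans^*}{\gphi_1\wedge\ldots\wedge\gphi_n}$, as required. The only subtle step I anticipate is the bookkeeping in the previous paragraph, namely justifying carefully that the subformula count of $\spec\wedge\psi_{v^*}$ (with possibly several $\true$-conjuncts collapsed) is majorized by $b$; everything else is either an immediate appeal to Lemma~\ref{lem:value-as-ltl} or the bounded synthesis bound.
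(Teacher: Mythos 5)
Your proof follows essentially the same route as the paper's: define the optimal achievable value $v^*$, invoke Lemma~\ref{lem:value-as-ltl} to obtain $\psi_{v^*}$, apply the bounded synthesis size bound to $\spec\wedge\psi_{v^*}$, and transfer the value guarantee back via part (2) of the lemma. The only difference is that you are slightly more careful than the paper about why $|\subf{\spec\wedge\psi_{v^*}}|$ is majorized by $b$ when some conjuncts of $\psi_{v^*}$ are $\true$ (a detail the paper's proof passes over silently), so the proposal is correct.
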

\begin{proof}
	Let $v^* = \max\big\{v \in \{0,\ldots,n\}^3 \mid \exists \trans: \trans\models \varphi \text{ and } \val\trans{\gphi_1 \wedge \ldots \wedge \gphi_n} = v\big\}.$
	Let $\trans$ be a transition system such that $\trans \models \varphi$ and 
	$\val\trans{\gphi_1 \wedge \ldots \wedge \gphi_n} = v^*$. According to Lemma~\ref{lem:value-as-ltl}, there exists an LTL formula $\psi_{v^*}$ that satisfies the conditions of the lemma. Thus, $\trans \models \varphi \wedge \psi_{v^*}$. According to~\cite{ScheweF07a}, there exists a transition system $\trans^*$ such that $\trans^* \models \varphi \wedge \psi_{v^*}$ and $|\trans^*| \leq \left(\big(2^{|\subf{\varphi \wedge \psi_{v^*}}| +\log |\varphi \wedge \psi_{v^*}|}\big)!\right)^2$. Combining this with the guarantees of Lemma~\ref{lem:value-as-ltl}, we get that $\val{\trans^*}{\gphi_1 \wedge \ldots \wedge \gphi_n} \geq v^*$, 
	$\trans^* \models \spec$ and $|\trans^*| \leq \left((2^{b+\log b})!\right)^2$.
	Thus, $\trans^*$ satisfies condition \emph{(2)}, and since the value $v^*$ is optimal, we have that condition \emph{(1)} holds as well. \qed
\end{proof}

The bound above is estimated based on the size of the specifications, using a worst-case bound on the size of the corresponding automata. Given the automata for all the specifications $\gphi_i,\fgphi_i$ and $\gfphi_i$, a potentially better bound can be estimated based on the sizes of these automata.

Lemma~\ref{lem:value-as-ltl} immediately provides a naive synthesis procedure, which searches for an optimal implementation by enumerating possible $\psi_v$ formulas and solving the corresponding realizability questions. The total number of these formulas is $4^n$, where $n$ is the number of soft specifications. The approach that we propose avoids this rapid growth, by reducing the optimization problem to a single MaxSAT instance, making use of  the power of the state-of-the-art MaxSAT solvers.

Figure~\ref{fig:proc_diag} gives an overview of our maximum realizability procedure and the automata constructions it involves. As in the bounded synthesis approach, we construct a universal co-B\"uchi automaton $\autA$ for the hard specification $\varphi$.  For each soft specification $\gphi_j$, we construct a pair of automata corresponding to the relaxations of $\gphi_j$. The relaxation $\gfphi_j$ is treated as in bounded synthesis. For $\gphi_i$ and $\fgphi_i$, we construct a single universal B\"uchi automaton and define a corresponding annotation function as described next.

\subsection{Automata and Annotations for Soft Safety Specifications}\label{sec:automata-safety}

We present here the reduction to MaxSAT for the case when each soft specification is of the form $\gpsi$ where $\psi$ is a \emph{syntactically safe} LTL formula. In this case, we construct a single automaton for both $\gpsi$ and its relaxation $\fgpsi$,  and encode the existence of a single annotation function in the MaxSAT problem. The size of this automaton is at most exponential in the size of $\gpsi$.
In the general case, we can treat $\gpsi$ and $\fgpsi$ separately, in the same way that we treat the relaxation $\gfpsi$  of $\gpsi$ in the presented encoding. That would require in total three instead of two annotation functions per soft specification.

We now describe the construction of a universal B\"uchi automaton  $\autB_{\scriptsize \gpsi}$ for a syntactically safe soft specification $\gpsi$ and show how we can modify it to obtain an automaton $\relaxfg(\gpsi)$ that incorporates the relaxation of $\gpsi$ to $\fgpsi$.

\begin{proposition}\label{prop:aut-globally}
	Given an LTL formula $\gpsi$ where $\psi$ is syntactically safe, we can construct a universal B\"uchi automaton  $\autB_{\scriptsize \gpsi} = (Q_{\scriptsize \gpsi},q_0^{\scriptsize \gpsi},\delta_{\scriptsize \gpsi},F_{\scriptsize \gpsi})$ such that 
	$\lang{\autB_{\scriptsize \gpsi}} = \{\trans \mid \trans \models \gpsi\}$, and $\autB_{\scriptsize \gpsi}$ has a unique non-accepting sink state, that is, there exists a unique state $\rej_\psi \in Q_{\scriptsize \gpsi}$ such that 
	$F_{\scriptsize \gpsi} = Q_{\scriptsize \gpsi} \setminus \{\rej_\psi\}$, and for every $\sigma \in \Sigma$ it holds that $\{q \in Q_{\scriptsize \gpsi} \mid (\rej_\psi,\sigma,q) \in \delta_{\scriptsize \gpsi}\}  = \{\rej_\psi\}$.
\end{proposition}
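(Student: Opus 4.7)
The plan is to construct $\autB_{\gpsi}$ via a tableau-style subset construction, exploiting the fact that $\gpsi$ is a safety property whenever $\psi$ is syntactically safe. The key observation is that, since $\psi$ avoids $\LTLuntil$ in NNF, every violation of $\psi$ on a trace is witnessed at some finite position by purely propositional evidence, and hence every violation of $\gpsi$ is witnessed by a finite bad prefix. Consequently it suffices to track, at each step, the currently pending obligations of $\psi$ and detect when these are propositionally inconsistent with the current letter; no liveness tracking is needed, so a Büchi condition with a single non-accepting state can do the job.

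Concretely, I would take $Q_{\gpsi}$ to consist of a dedicated sink $\rej_\psi$ together with the ``consistent'' subsets $q \subseteq \subf\psi$ reachable by unfolding $\psi$. The initial state is $q_0^{\gpsi} = \{\psi\}$. From each non-sink $q$ and letter $\sigma$, I would (i) unfold the Boolean, $\LTLnext$, $\LTLrelease$, $\LTLfinally$, and $\LTLglobally$ operators occurring in $q$ in the standard tableau manner, (ii) insert the fresh obligation $\psi$ into every successor set in order to enforce the outer $\LTLglobally$, and (iii) route $q$ to $\rej_\psi$ exactly when the resulting obligation set contains a literal falsified by $\sigma$. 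The sink $\rej_\psi$ is equipped with self-loops on every letter. Setting $F_{\gpsi} = Q_{\gpsi} \setminus \{\rej_\psi\}$ yields the required unique non-accepting sink structure, with $|Q_{\gpsi}| \le 2^{O(|\gpsi|)}$.

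For correctness I would establish both inclusions of $\lang{\autB_{\gpsi}} = \{\trans \mid \trans \models \gpsi\}$ through the Büchi acceptance condition. Because $\rej_\psi$ is an absorbing non-accepting state and all remaining states lie in $F_{\gpsi}$, a run visits $F_{\gpsi}$ infinitely often if and only if it never enters $\rej_\psi$. Universal acceptance of a trace therefore reduces to showing that \emph{every} run on that trace avoids $\rej_\psi$, which by induction on the run length is equivalent to: at every position the pending obligations remain satisfiable by the observed prefix. The re-injection of $\psi$ at every step, combined with the standard unfolding rules, ensures this condition coincides precisely with $\psi$ holding at every suffix, i.e., with $\gpsi$.

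The main obstacle will be verifying, in the induction above, that collapsing \emph{all} inconsistent states into a single sink does not lose information and that universal branching does not ``hide'' a bad prefix across parallel runs. Both issues hinge on the syntactic safety of $\psi$: because no $\LTLuntil$ appears in NNF, every unresolved obligation decomposes into a conjunction of safety commitments that, once contradicted by a finite prefix, remain contradicted forever. This is what allows a Büchi automaton with a single non-accepting sink to capture $\gpsi$ exactly, whereas a general LTL formula would force a richer acceptance condition to track eventualities.
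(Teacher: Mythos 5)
There is a genuine gap in your construction: you build what is in effect the standard \emph{nondeterministic} tableau automaton for $\psi$ (states are obligation sets, disjunctions are resolved by branching, a branch dies when an obligation is propositionally contradicted by the current letter) and then declare it \emph{universal}. Under universal semantics a word is rejected as soon as \emph{some} run reaches $\rej_\psi$; but in a tableau for $\psi$, a run reaching the sink only means that one particular existential guess for resolving the disjunctions failed, not that the word violates $\psi$. Concretely, take $\psi = p \vee \LTLnext p$ and the trace $\emptyset,\{p\},\{p\},\ldots$, which satisfies $\gpsi$: from $\{\psi\}$ on the first letter, the branch that commits to the disjunct $p$ is immediately inconsistent and enters $\rej_\psi$, so your universal automaton rejects a model of $\gpsi$. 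Syntactic safety does not rescue this --- it removes $\LTLuntil$ (and hence $\LTLfinally$, which you list among the operators to unfold but which cannot occur in a syntactically safe formula), yet it does not remove disjunction, and the obstacle you flag (``universal branching hiding a bad prefix across parallel runs'') is actually the opposite problem: universal branching \emph{manufactures} spurious violations out of failed existential guesses.

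The paper sidesteps this by never building a tableau for $\psi$ itself. It starts from the Kupferman--Vardi nondeterministic finite automaton $\autN$ that accepts at least one \emph{bad prefix} of every word violating $\psi$ and accepts only bad prefixes; it merges the accepting states of $\autN$ into the single sink $\rej_\psi$ and only then reads the automaton universally, so that ``some run reaches $\rej_\psi$'' genuinely coincides with ``some suffix has a bad prefix for $\psi$''. The outer $\LTLglobally$ is handled by a self-loop at the initial state that restarts the bad-prefix search at every position, which is the correct analogue of your re-injection of $\psi$ into every successor. To repair your argument you would have to either determinize your tableau or, equivalently, run the tableau so that the nondeterminism searches for \emph{violations} rather than for witnesses --- at which point you have essentially reconstructed the bad-prefix automaton on which the paper's proof rests.
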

\begin{proof} We first describe the construction of the automaton $\autB_{\scriptsize \gpsi}$ of the desired form, and then proceed to prove its correctness.

{\bf Construction.}
First we construct a universal B\"uchi automaton $\autB_\psi$ for the formula $\psi$ such that, for every word $w \in \alphabet^\omega$, it holds that $w$ is accepted by $\autB_\psi$ if and only if $w \models \psi$. To this end, we use the following result from~\cite{KupfermanV01}.
Given a syntactically safe LTL formula $\psi$, we 
can construct a nondeterministic finite automaton $\autN = (Q_{\autN},q_0^{\autN},\delta_{\autN},F_{\autN})$ with at most $2^{\bigo{|\psi|}}$ states, and such that:\\
-- if $v \in \alphabet^*$ is accepted by $\autN$, then for all $w' \in \alphabet^\omega$ we have $vw' \not\models \psi$, and\\
-- for every $w \in \alphabet^\omega$, if $w \not \models \psi$, then there exists a prefix $v$ of $w$ accepted by $\autN$.
Thus, $\autN$ accepts at least one bad prefix of each word $w \in \alphabet^\omega$ that violates $\psi$. 
 
The automaton $\autB_\psi = (Q_\psi,q_0^\psi,\delta_\psi,F_\psi)$ is obtained from $\autN$ as follows. The set of states of $\autB_\psi$ consists of those states of $\autN$ that are not accepting, together with a new state $\rej_\psi \not\in Q_{\autN}$, that is, $Q_\psi = (Q_{\autN} \setminus F_{\autN}) \cup \{\rej_\psi\}$. We let $q_0^\psi = q_0^{\autN}$ and $F_\psi = Q_\psi \setminus \{\rej_\psi\}$. The transition relation of $\autB_\psi$ is obtained from $\delta_{\autN}$ by redirecting all transitions leading to states in $F_{\autN}$ to the new state $\rej_\psi$. Formally, 
\begin{align*}
\delta_\psi = \left(\delta_{\autN} \cap (Q_\psi \times \Sigma \times Q_\psi)\right)& \cup \{(q,\sigma,\rej_\psi) \mid \sigma \in \alphabet \text{ and }\exists q' \in F_{\autN}.\ (q,\sigma,q') \in \delta_{\autN}\}\\&
\cup \{(\rej_\psi,\sigma,\rej_\psi) \mid \sigma \in \alphabet\}.
\end{align*}

\noindent
We now construct a universal B\"uchi automaton $\autB_{\scriptsize \gpsi} = (Q_{\scriptsize \gpsi},q_0^{\scriptsize \gpsi},\delta_{\scriptsize \gpsi},F_{\scriptsize \gpsi})$ such that $w$ is accepted by $\autB_{\scriptsize \gpsi}$ iff $w \models \gpsi$.
We let $Q_{\scriptsize \gpsi} = Q_\psi$, $q_0^{\scriptsize \gpsi} = q_0^\psi$, and  $F_{\scriptsize \gpsi} = F_\psi$. The transition relation $\delta_{\scriptsize \gpsi}$ extends $\delta_\psi$ by adding a self-loop at the initial state $q_0^\gpsi$ for all transitions from $q_0^\psi$ in $\delta_\psi$ that do not lead to $\rej_\psi$:
\begin{align*}
\delta_{\scriptsize \gpsi} = \delta_\psi & \cup  \{(q_0^\psi,\sigma,q_0^\psi) \mid \sigma \in \alphabet \text{ and }\exists q' \in (Q_\psi \setminus \{\rej_\psi\}).\ (q_0^\psi,\sigma,q') \in \delta_\psi\}.
\end{align*}

{\bf Correctness.}
Let  $\trans \in \lang{\autB_{\scriptsize \gpsi}}$. 
Since $\autB_{\scriptsize \gpsi}$ is a universal B\"uchi automaton, this means that the unique run graph of $\autB_{\scriptsize \gpsi}$ on $\trans$ is accepting, which in turn means that each infinite path contains infinitely many occurrences of states in $F_{\scriptsize \gpsi}$. Since $F_{\scriptsize \gpsi}$ contains all states except $\rej_\psi$, and $\rej_\psi$ is a sink state, it follows that every infinite path in the run graph contains only states in 
$F_{\scriptsize \gpsi}$. 

Suppose, for the sake of contradiction, that $\trans\not\models\gpsi$. Thus, there exists $\omega = \sigma_0,\sigma_1,\ldots\in\Traces(\trans)$ such that $\omega\not\models\gpsi$. Let $i \geq 0$ be an index such that $\sigma_i,\sigma_{i+1},\ldots \not\models\psi$. By the choice of the automaton $\autN$, there exists a prefix of $\sigma_i,\sigma_{i+1},\ldots$ accepted by $\autN$.
Since $\omega \in \Traces(\trans)$ and $\trans \in \lang{\autB_{\scriptsize \gpsi}}$, every path in the run graph corresponding to $\omega$ never visits $\rej_\psi$. Thus, since $\delta_{\scriptsize \gpsi}$ contains a self-loop at state $q_0^{\scriptsize \gpsi}$ with letters not leading to $\rej_\psi$, there exists a path in the run graph corresponding to $\sigma_0,\ldots,\sigma_{i-1}$ that ends in $q_0^{\scriptsize \gpsi}$. By the definition of $\delta_{\psi}$ and the existence of an accepting run of $\autN$ on a prefix of $\sigma_i,\sigma_{i+1},\ldots$ we can conclude that there exists a path in the run graph of $\autB_{\scriptsize \gpsi}$ corresponding to $\omega$ that reaches $\rej_\psi$, which is a contradiction.

For the other direction, consider a transition system $\trans$ such that $\trans \models \gpsi$, and suppose that $\trans \not \in \lang{\autB_{\scriptsize \gpsi}}$. This means that there exists an infinite path in the run graph of $\autB_{\scriptsize \gpsi}$ on $\trans$ that visits states in $F_{\scriptsize \gpsi}$ only finitely many times, which means that this path eventually reaches $\rej_{\psi}$. Let $\omega = \sigma_0,\sigma_1,\ldots\in\Traces(\trans)$ be the word corresponding to this path, and $i\geq 0$ be the last occurrence of $q_0^{\scriptsize \gpsi}$ on this path and $j > i$ be the index of the first occurrence of $\rej_\psi$.
Due to the definition of $\delta_\psi$, this implies that there exists an accepting run of $\autN$ on the word $\sigma_i,\ldots,\sigma_{j-1}$. Thus, $\sigma_i,\sigma_{i+1},\ldots\not\models\psi$, which in turn means  that $\omega\not\models\gpsi$. This is a contradiction with $\trans \models \gpsi$, and thus we can conclude that $\trans \in \lang{\autB_{\scriptsize \gpsi}}$.\qed
\end{proof}

From $\autB_{\scriptsize \gpsi}$, which has at most $2^{\mathcal{O}(|\psi|)}$ states, we obtain a universal B\"uchi automaton $\relaxfg(\gpsi)$ constructed by redirecting all the transitions leading  to $\rej_\psi$ to the initial state $q_0^{\scriptsize \gpsi}$. Formally, 
$\relaxfg(\gpsi) = (Q,q_0,\delta,F)$, where $Q = Q_{\scriptsize \gpsi} \setminus \{\rej_\psi\}$, $q_0 = q_0^{\scriptsize \gpsi}$, $F = F_{\scriptsize \gpsi}$ and 
$\delta = \big(\delta_{\scriptsize \gpsi} \setminus \{(q,\sigma,q') \in \delta_{\scriptsize \gpsi}\mid q' = \rej_\psi\}\big) \cup \{(q,\sigma,q_0) \mid (q,\sigma,\rej_\psi)\in \delta_{\scriptsize \gpsi}\}.$

Let $\Rej(\relaxfg(\gpsi)) = \{(q,\sigma,q_0) \in \delta \mid (q,\sigma,\rej_\psi) \in \delta_{\scriptsize \psi}\}$ be the set of transitions in $\relaxfg(\gpsi)$ that correspond to transitions in $\autB_{\scriptsize \gpsi}$ leading to $\rej_\psi$. 
The  automaton $\relaxfg(\gpsi)$ has the property that its run graph on a transition system $\trans$  does \emph{not} contain a reachable edge corresponding to a transition in $\Rej(\relaxfg(\gpsi))$ iff $\trans$ is accepted by the automaton $\autB_{\scriptsize \gpsi}$, (i.e., $\trans\models\gpsi$). 
Otherwise, if the run graph of $\relaxfg(\gpsi)$ on $\trans$ contains a reachable edge that belongs to $\Rej(\relaxfg(\gpsi))$, then  $\trans\not\models\gpsi$. However, if each infinite path in the run graph contains only a finite number of occurrences of such edges, then $\trans\models\fgpsi$. Based on these observations, we define an annotation function that annotates each node in the run graph with an upper bound on the number of edges in $\Rej(\relaxfg(\gpsi))$ visited on any path reaching the node. 

The next proposition formalizes the property that a transition system $\trans$ is accepted by $\autB_{\scriptsize \gpsi}$ if and only if the run graph of $\relaxfg(\gpsi)$ on $\trans$ does not contain a reachable edge corresponding to a transition in $\Rej(\relaxfg(\gpsi))$. 
\begin{proposition}\label{prop:rej-trans}\sloppy
Let $\trans$ be a transition system and let $G = (V,E)$ be the run graph of $\relaxfg(\gpsi)$ on $\trans$. Then, $\trans \in\lang{\autB_{\scriptsize \gpsi}}$ iff for every  $((s,q),\sigma,(s',q')) \in E$ with $(q,\sigma,q') \in  \Rej(\relaxfg(\gpsi))$, $(s,q)$ is not reachable from $(s_0,q_0)$ in $G$.
\end{proposition}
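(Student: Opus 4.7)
The approach is to exploit Proposition~\ref{prop:aut-globally}, which guarantees that $\rej_\psi$ is a non-accepting sink in $\autB_{\scriptsize \gpsi}$. Consequently, $\trans \in \lang{\autB_{\scriptsize \gpsi}}$ is equivalent to the statement that no node of the form $(s,\rej_\psi)$ is reachable from $(s_0,q_0^{\scriptsize \gpsi})$ in the run graph $G^{\scriptsize \gpsi}$ of $\autB_{\scriptsize \gpsi}$ on $\trans$. I will then relate reachability in $G$ to reachability in $G^{\scriptsize \gpsi}$ by comparing their edge sets: the two graphs share all non-$\Rej$ edges, while each $\Rej$-edge $((s,q),\sigma,(s',q_0)) \in E$ of $G$ corresponds to an edge $((s,q),\sigma,(s',\rej_\psi))$ in $G^{\scriptsize \gpsi}$.

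For the forward direction, assume $\trans \in \lang{\autB_{\scriptsize \gpsi}}$ and suppose, for contradiction, that there is a reachable edge $((s,q),\sigma,(s',q_0)) \in E$ with $(q,\sigma,q_0) \in \Rej(\relaxfg(\gpsi))$. The key intermediate claim is that $(s,q)$ is then also reachable in $G^{\scriptsize \gpsi}$. I would establish this by induction on the length of a witnessing path $(s_0,q_0), \ldots, (s,q)$ in $G$, showing that none of its edges can be $\Rej$-edges: any such edge would, by the correspondence above, force a node $(\cdot,\rej_\psi)$ to be reachable in $G^{\scriptsize \gpsi}$, contradicting $\trans \in \lang{\autB_{\scriptsize \gpsi}}$. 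Hence the path lifts verbatim to $G^{\scriptsize \gpsi}$, so $(s,q)$ is reachable there. But then the assumed $\Rej$-edge yields the corresponding $G^{\scriptsize \gpsi}$-edge to $(s',\rej_\psi)$, contradicting acceptance.

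For the converse, assume no $\Rej$-edge of $G$ has a reachable source and suppose, for contradiction, that some $(s,\rej_\psi)$ is reachable in $G^{\scriptsize \gpsi}$. Pick a shortest such path $(s_0,q_0),\ldots,(s^{k-1},q^{k-1}),(s^k,\rej_\psi)$; since $\rej_\psi$ is a sink, minimality forces $q^i \neq \rej_\psi$ for all $i<k$, so every edge on the prefix $(s_0,q_0),\ldots,(s^{k-1},q^{k-1})$ belongs both to $\delta_{\scriptsize \gpsi}$ and to $\delta$ (the redirection only affects edges landing in $\rej_\psi$). Thus the prefix is a valid path in $G$, making $(s^{k-1},q^{k-1})$ reachable in $G$. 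The final transition $(q^{k-1},\sigma^{k-1},\rej_\psi)\in\delta_{\scriptsize \gpsi}$ translates to a $\Rej$-edge $((s^{k-1},q^{k-1}),\sigma^{k-1},(s^k,q_0))\in E$, whose source is reachable in $G$, contradicting the hypothesis.

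The main subtlety I expect is the non-symmetric correspondence between the two run graphs: a $\Rej$-edge goes to $q_0$ in $G$ but to $\rej_\psi$ in $G^{\scriptsize \gpsi}$, so paths in $G$ do not in general lift to paths in $G^{\scriptsize \gpsi}$ reaching the same node. The inductive argument in the forward direction sidesteps this by using the acceptance hypothesis to rule out any $\Rej$-edge along the witnessing path, after which the lift becomes immediate; dually, the shortest-path choice in the backward direction ensures the prefix avoids $\rej_\psi$ and therefore projects cleanly down to $G$.
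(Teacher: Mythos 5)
Your proposal is correct and takes essentially the same route as the paper's proof: both directions hinge on transferring (prefixes of) paths between the run graph of $\relaxfg(\gpsi)$ and that of $\autB_{\scriptsize \gpsi}$, using the fact that $\rej_\psi$ is a non-accepting sink, so that a reachable $\Rej$-edge in one graph corresponds exactly to a reachable entry into $\rej_\psi$ in the other. Your explicit induction (forward direction) and shortest-path choice (backward direction) merely make precise the paper's ``without loss of generality, take the first $\Rej$-edge on the path'' step; there is no gap.
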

\begin{proof}
Suppose, for the sake of contradiction, that $\trans \in\lang{\autB_{\scriptsize \gpsi}}$ and let $G'=(V',E')$ be the run graph of $\autB_{\scriptsize \gpsi}$ on $\trans$. Suppose that there exists a path $(s_0,q_0),\sigma_0,\ldots (s_l,q_l)$ in $G$ such that there exists an edge $((s_l,q_l),\sigma,(s',q')) \in E$ with $(q_l,\sigma,q') \in  \Rej(\relaxfg(\gpsi))$. Without loss of generality, we assume that 
$(q_i,\sigma_i,q_{i+1}) \not \in \Rej(\relaxfg(\gpsi))$ for all $i=0,\ldots,l-1$. Then, the sequence $(s_0,q_0),\sigma_0,\ldots (s_l,q_l),\sigma,(s',q')$ corresponds to a path in the run graph $G'$ of $\autB_{\scriptsize \gpsi}$ on $\trans$ which enters the state $\rej_\psi$. Since $\rej_\psi$ is a non-accepting sink state, we conclude that $G'$ is not accepting. This implies $\trans \not\in\lang{\autB_{\scriptsize \gpsi}}$, which is a contradiction.

\sloppy Suppose now that for every node $(s,q)$ reachable in $G$ from $(s_0,q_0)$ and every edge $((s,q),\sigma,(s',q')) \in E$ we have that $(q,\sigma,q') \not\in  \Rej(\relaxfg(\gpsi))$. 
Assume that $\trans \not\in\lang{\autB_{\scriptsize \gpsi}}$, which means that there exists an infinite path from $(s_0,q_0)$ in the run graph of $\autB_{\scriptsize \gpsi}$ on $\trans$ that reaches the state $\rej_\psi$. This path corresponds to a path in $G$ from $(s_0,q_0)$ to some state $(s,q)$ for which there is an edge  $((s,q),\sigma,(s',q')) \in E$ with $(q,\sigma,q') \in  \Rej(\relaxfg(\gpsi))$, which is a contradiction.\qed
\end{proof}

A function $\anotfg : S \times Q \to \natsbot$ is \emph{\fgvalid}\ annotation for the run graph of the automaton $\relaxfg(\gpsi) = (Q,q_0,\delta,F)$ on the transition system $\trans = (S,s_0,\tau)$ if 
\begin{compactitem}
\item[\emph{(1)}] $\anotfg(s_0,q_0) \neq \bot$, i.e., the pair of initial states is labeled with a number, and 
\item[\emph{(2)}] if $\anotfg(s,q) \neq \bot$, then for every edge $((s,q),\sigma,(s',q'))$ in the run graph, we have that $\anotfg(s',q') \neq \bot$, and
\begin{compactitem}
\item if $(q,\sigma,q') \in\Rej(\relaxfg(\gpsi))$, then $\anotfg(s',q') >  \anotfg(s,q)$, and 
\item if $(q,\sigma,q') \not\in\Rej(\relaxfg(\gpsi))$, then $\anotfg(s',q') \geq  \anotfg(s,q)$.
\end{compactitem}
\end{compactitem}
This guarantees that $\trans \models \fgpsi$ iff there exists a \fgvalid\ $|\trans|$-bounded annotation $\anotfg$ for $\trans$ and $\relaxfg(\gpsi)$. Moreover, if $\anotfg$ is $|\trans|$-bounded and $\anotfg(s_0,q_0) = |\trans|$, then $\trans \models \gpsi$, as this means that no edge in $\Rej(\relaxfg(\gpsi))$ is ever reached.

\begin{proposition}\label{prop:anotfg}
Let $\trans = (S,s_0,\tau)$ be a finite-state transition system, and $G = (V,E)$ be the run graph of  $\relaxfg(\gpsi)$ on $\trans$. Then, $\trans \models \fgpsi$ if and only if there exists a \fgvalid\ $|\trans|$-bounded annotation for $G$. 
\end{proposition}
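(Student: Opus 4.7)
The plan is to prove both directions of the equivalence by tracking the number of $\Rej$-edges along infinite paths in the run graph $G$ of $\relaxfg(\gpsi)$ on $\trans$. As a key preliminary step I would establish the characterization that $\trans \models \fgpsi$ iff every infinite path in $G$ uses only finitely many edges in $\Rej(\relaxfg(\gpsi))$. This is the trace-by-trace, suffix analogue of Proposition~\ref{prop:rej-trans}: a $\Rej$-edge in $\relaxfg(\gpsi)$ corresponds to a transition to the non-accepting sink $\rej_\psi$ in $\autB_{\scriptsize \gpsi}$, so finitely many $\Rej$-edges along every run on a trace $w$ means that from some position $i$ onwards no run reaches $\rej_\psi$, hence $w[i..] \models \gpsi$ and therefore $w \models \fgpsi$.

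For the ($\Rightarrow$) direction, assume a $|\trans|$-bounded \fgvalid\ annotation $\anotfg$. On any infinite path $(s_0,q_0), \sigma_0, (s_1,q_1), \sigma_1, \ldots$ in $G$, the defining conditions \emph{(1)} and \emph{(2)} yield by induction that every $\anotfg(s_i,q_i) \neq \bot$, the sequence of values is non-decreasing, and strictly increases at every $\Rej$-edge. Boundedness by $|\trans|$ admits only finitely many strict increases and hence only finitely many $\Rej$-edges on the path. The preliminary characterization then gives $\trans \models \fgpsi$.

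For the ($\Leftarrow$) direction, assume $\trans \models \fgpsi$. I first observe that no reachable cycle in $G$ contains a $\Rej$-edge, since iterating such a cycle would produce a reachable infinite path with infinitely many $\Rej$-edges, contradicting the preliminary characterization. I then define
$$\anotfg(s,q) = \max\{k : \text{there is a path from } (s_0,q_0) \text{ to } (s,q) \text{ in } G \text{ using } k \text{ } \Rej\text{-edges}\}$$
for reachable $(s,q)$, and $\anotfg(s,q) = \bot$ otherwise. Any such path can be reduced to a simple path with the same $\Rej$-edge count by excising cycles, which carry no $\Rej$-edges; hence the maximum is finite (realized by a simple path). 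Conditions \emph{(1)} and \emph{(2)} follow at once by appending the relevant edge to a maximizing path.

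The main obstacle is establishing the bound $\anotfg(s,q) \leq |\trans|$. I would exploit the structural property, visible from the construction of $\relaxfg(\gpsi)$, that every $\Rej$-edge terminates at a node whose automaton component is the initial state $q_0$. On any simple path from $(s_0,q_0)$ to $(s,q)$ using $k$ $\Rej$-edges, the endpoints of those edges together with the initial node $(s_0,q_0)$ are $k+1$ distinct nodes of the form $(s',q_0)$, so their first components are $k+1$ pairwise distinct states of $\trans$. This forces $k+1 \leq |\trans|$, and hence $\anotfg(s,q) \leq |\trans|-1$, so the constructed $\anotfg$ is a $|\trans|$-bounded \fgvalid\ annotation, completing the proof.
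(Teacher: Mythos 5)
Your proof is correct and follows essentially the same route as the paper's: both directions rest on the observation that $\trans\models\fgpsi$ exactly when every infinite path in $G$ crosses only finitely many $\Rej$-edges, the annotation is built as the maximal $\Rej$-edge count over paths reaching a node, and the bound comes from counting distinct nodes of the form $(s,q_0)$. Your packaging is marginally cleaner — you isolate the characterization as a preliminary lemma, derive the annotation-to-$\fgpsi$ direction directly from boundedness rather than via the paper's contradiction argument, and make the simple-path reduction explicit — but the underlying ideas are identical.
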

\begin{proof}\sloppy
Suppose that $\trans \models \fgpsi$. We will fist show that in every infinite path from $(s_0,q_0)$ in $G$ there are at most $|S|$ occurrences of edges whose corresponding transitions are in $\Rej(\relaxfg(\gpsi))$, and then we will use this fact to define a \fgvalid\ $|S|$-bounded annotation. Assume, for the sake of contradiction, that there exists an infinite path $(s_0,q_0),\sigma_0,(s_1,q_1),\sigma_1,\ldots$ such that for infinitely many positions $i\geq 0$ it holds that $(q_i,\sigma_i,q_{i+1}) \in \Rej(\relaxfg(\gpsi))$. Let $i_1 < i_2 <\ldots$ be a sequence of such positions. By the construction of $\relaxfg(\gpsi)$, we have $q_{i_j+1} = q_0$ for each $i_j$. Thus, using reasoning similar to that in Proposition~\ref{prop:rej-trans}, we can show that the trace $\sigma_0,\sigma_1,$ contains infinitely many positions $k$ such that $\sigma_k,\sigma_{k+1},\ldots \not\models\psi$. This means that $\sigma_0,\sigma_1,\ldots \not \models \fgpsi$. Since $\sigma_0,\sigma_1,\ldots \in \Traces(\trans)$, we can conclude that $\trans \not \models \fgpsi$, which is a contradiction. 

Thus, each infinite path in G contains only finitely many occurrences of edges in $\Rej(\relaxfg(\gpsi))$. Since the number of distinct nodes in $G$ of the form $(s,q_0)$ is $|S|$, we obtain an upper bound of $|S|$ occurrences of transitions from $\Rej(\relaxfg(\gpsi))$ on every path in $G$. Thus, we can construct a \fgvalid\ $|S|$-bounded annotation $\anotfg$ by mapping each reachable node $(s,q)$ to the maximal number of transitions from $\Rej(\relaxfg(\gpsi))$ on a path from $(s_0,q_0)$ to $(s,q)$, and mapping each unreachable node to $\bot$.

For the other direction, suppose that $\anotfg$ is a \fgvalid\ $|S|$-bounded annotation for $\trans$ and $\relaxfg(\gpsi)$. Assume that $\trans \not\models \fgpsi$. This means that there exists a trace $w = \sigma_0,\sigma_1,\ldots \in \Traces(\trans)$ such that for every position $i$ it holds that $\sigma_i,\sigma_i+1,\ldots \not\models \gpsi$. Let $s_0,\sigma_0,s_1,\sigma_1\ldots$ be the execution of $\trans$ corresponding to $w$. Since $\sigma_i,\sigma_i+1,\ldots \not\models \gpsi$, with reasoning similar to Proposition~\ref{prop:rej-trans} we can establish that there exists a path in
$G$ starting from $(s_i,q_0)$ that eventually takes an edge corresponding to a transition in $\Rej(\relaxfg(\gpsi))$, and by the construction of $\relaxfg(\gpsi)$, this transition leads to the node $(s_j,q_0)$ for some $j > i$. Thus, by induction, we can establish the existence of an infinite path $(s_0,q_0),(s_1,q_1),\ldots$ in $G$ that contains infinitely many occurrences of edges whose transitions are in $\Rej(\relaxfg(\gpsi))$. Since the annotation $\anotfg$ is \fgvalid, we can show by induction that for each $i\geq 0$ it holds that $\anotfg(s_i,q_i) \in \nats$ and $\anotfg(s_{i+1},q_{i+1})  \geq \anotfg(s_i,q_i)$. Since $G$ is finite, this path contains an edge  $((s_i,q_i),\sigma_i,(s_{i+1},q_{i+1}))$ for which $(q_i,\sigma_i,q_{i+1})\in\Rej(\relaxfg(\gpsi))$, and which is such that there exists $j \leq i$ such that $(s_{i+1},q_{i+1}) = (s_j,q_j)$. Since the annotation $\anotfg$ is \fgvalid, we have that $\anotfg(s_j,q_j) \leq \anotfg(s_i,q_i)$ and $\anotfg(s_i,q_i) < \anotfg(s_{i+1},q_{i+1})$, which contradicts $(s_{i+1},q_{i+1}) = (s_j,q_j)$. Thus, by contradiction, we conclude that $\trans \models \fgpsi$. \qed
\end{proof}
In particular, we have that if $\anotfg(s_0,q_0) \in \nats$, then $\trans \models \fgpsi$, and if $\anotfg$ is $c$-bounded and $\anotfg(s_0,q_0) = c$, then $\trans \models \gpsi$.
This property allows us to capture the satisfaction of $\gpsi$ and $\fgpsi$ with soft clauses for the same annotation function in the MaxSAT formulation.

\subsection{MaxSAT Encoding of Bounded Maximum Realizability}\label{sec:maxsat}

Let $\autA = (Q,q_0,\delta,F)$ be a universal co-B\"uchi automaton for the LTL formula $\spec$.
For each syntactically safe formula $\gphij$, $j \in\{1,\ldots,n\}$, we consider two universal automata:
the universal automaton $\autB_j = \relaxfg(\gphi_j)= (Q_j,q_0^j,\delta_j,F_j)$, constructed as described in Section~\ref{sec:automata-safety}, and 
a universal co-B\"uchi automaton $\autA_j = (\widehat Q_j,\widehat q_0^j,\widehat\delta_j,\widehat F_j)$ for the formula $\gfphi_j$. 
Given a bound $b$ on the size of the desired transition system, we encode the bounded maximum realizability problem as a MaxSAT problem with the following sets of variables and constraints.

{\bf Variables:} The MaxSAT formulation includes the variables from the SAT formulation of the bounded synthesis problem, which represent the desired transition system $\trans$ and the desired valid annotation of the run graph of $\autA$ on $\trans$. Additionally, it includes variables for representing the annotations $\anotfg_j$ and $\anot_j$ for $\autB_j$ and $\autA_j$, respectively, similarly to $\anot$ in the SAT encoding. More precisely, the variables for $\anotfg_j$ and $\anot_j$ are respectively represented by variables $\anotfgbj_{s,q}$ and $\anotfgnj_{s,q}$  where $s\in S$ and $q \in Q_j$, and variables $\anotbj_{s,q}$ and $\anotnj_{s,q}$ where $s\in S$ and $q \in \widehat Q_j$.

The set of constraints includes $C_\tau$ and $C_\anot$ from the SAT formulation as hard constraints, as well as the following constraints for the new annotations.

{\bf Hard constraints for valid annotations:}
For each $j=1,\ldots,n$, let
\begin{align*}
C_\anotfg^{j}\defeq\bigwedge_{q,q' \in Q_j}\bigwedge_{s,s' \in S}\bigwedge_{\inpval \in \ialphabet}
\Big( &
\big(
\anotfgbj_{s,q} \wedge
\delta^j_{s,q,\inpval,q'}\wedge 
\tau_{s,\inpval,s'}
\big) \rightarrow 
\succa_{\anotfg}^j(s,q,s',q',\inpval) 
\Big),
\\
C_\anot^{j}\defeq\bigwedge_{q,q' \in \widehat Q_j}\bigwedge_{s,s' \in S}\bigwedge_{\inpval \in \ialphabet}
\Big( &
\big(
\anotbj_{s,q} \wedge
\widehat\delta^j_{s,q,\inpval,q'}\wedge 
\tau_{s,\inpval,s'}
\big) \rightarrow 
\succa_{\anot}^j(s,q,s',q',\inpval)
\Big),
\end{align*}

\begin{align*}
\text{where }\quad \succa_\anotfg^j(s,q,s',q',\inpval) \defeq \anotfgbj_{s',q'} \wedge &
\big(\rej^j(s,q,q',\inpval) \rightarrow \anotfgnj_{s',q'} > \anotfgnj_{s,q}\big) \wedge
\\&
\big(\neg\rej^j(s,q,q',\inpval) \rightarrow \anotfgnj_{s',q'} \geq \anotfgnj_{s,q}\big)
,
 \end{align*}
and $\rej^j(s,q,q',\inpval)$ is a formula over $o_{s,\inpval}$ obtained from $\Rej(\autB_j)$.\sloppy
The formula $\succa_{\anot}^j(s,\widehat q,s',\widehat q',\inpval)$ is analogous to 
$\succa_{\anot}(s,q,s',q',\inpval)$ defined in Section~\ref{sec:def-boundedsynth}.

{\bf Soft constraints for valid annotations:} Let $b \in \nats_{>0}$ be the bound on the size of the transition system.
For each $j=1,\ldots,n$, we define
\[
\begin{array}{lllll}
\soft_{\tiny\LTLglobally}^{j} & \quad \defeq \quad &  \anotfgbj_{s_0,q_0} \wedge (\anotfgnj_{s_0,q_0} = b) & \qquad & \text{with weight }1,\\
\soft_{\tiny\LTLfinally\LTLglobally}^{j} & \quad \defeq \quad & \anotfgbj_{s_0,q_0} & \quad & \text{with weight }n, \text{ and }\\
\soft_{\tiny\LTLglobally\LTLfinally}^{j} & \quad \defeq \quad & \anotfgbj_{s_0,q_0} \vee \anotbj_{s_0,\widehat  q_0} & \quad & \text{with weight }n^2.\\
\end{array}
\]

The definition of the soft constraints guarantees that $\trans \models \gphi_j$ if and only if there exist corresponding annotations that satisfy all three of the soft constraints for $\gphi_j$. Similarly, if $\trans \models \fgphi_j$, then $\soft_{\tiny\LTLfinally\LTLglobally}^{j}$ and $\soft_{\tiny\LTLglobally\LTLfinally}^{j}$ can be satisfied.
The weights of the soft clauses reflect the ordering of transition systems with respect to their satisfaction of $\gphi_1 \wedge \ldots \wedge \gphi_n$, as stated below. 
\begin{lemma}\label{lem:encoding-weights}
Let $\trans'$ and $\trans''$ be two transition systems such that $\trans' \models \spec$ and $\trans'' \models \spec$. Let $a'$ and $a''$ be the variable assignments satisfying the constraint system, such that $a'$ is an optimal assignment consistent with $\trans'$, and $a''$ is an optimal assignment consistent with $\trans''$. Furthermore, let $w'$ and $w''$ be the sums of the weights of the soft clauses satisfied by $a'$ and $a''$, respectively. Then, it holds that
$\val{\trans'}{\gphi_1 \wedge \ldots \wedge \gphi_n} < \val{\trans''}{\gphi_1 \wedge \ldots \wedge \gphi_n} \text{ iff } w' < w''.$
\end{lemma}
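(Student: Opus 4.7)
The plan is to decompose the lemma into two independent facts and combine them: \emph{(i)} for any $\trans\models\spec$ and any optimal assignment $a$ consistent with $\trans$, the clauses $\soft_{\tiny\LTLglobally}^{j}$, $\soft_{\tiny\LTLfinally\LTLglobally}^{j}$, $\soft_{\tiny\LTLglobally\LTLfinally}^{j}$ are satisfied by $a$ iff $\trans\models\gphi_j$, $\trans\models\fgphi_j$, $\trans\models\gfphi_j$ respectively; and \emph{(ii)} the weighted sum $n^2 v_1 + n v_2 + v_3$ strictly preserves the lexicographic order on triples $(v_1,v_2,v_3)\in\{0,\ldots,n\}^3$ satisfying the monotonicity $v_3\leq v_2\leq v_1$. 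Together these give that the total weight $w$ collected by $a$ equals $n^2 v_1 + n v_2 + v_3$ for $(v_1,v_2,v_3) = \val\trans{\gphi_1\wedge\ldots\wedge\gphi_n}$, and that this map is a strict order-embedding, which yields the desired biconditional.

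For \emph{(i)} I would proceed clause by clause. The $\soft_{\tiny\LTLfinally\LTLglobally}^{j}$ case is immediate from Proposition~\ref{prop:anotfg}. For $\soft_{\tiny\LTLglobally}^{j}$ I would invoke Proposition~\ref{prop:rej-trans}: if $\trans\models\gphi_j$, no $\Rej$-edge is reachable in the run graph of $\relaxfg(\gphi_j)$ on $\trans$, so the constant-$b$ annotation witnesses the clause; conversely, if $\anotfgnj_{s_0,q_0^j} = b$ and the annotation is $b$-bounded, then no $\Rej$-edge can be reachable (any such edge would force a successor value strictly above $b$), so $\trans\models\gphi_j$. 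For $\soft_{\tiny\LTLglobally\LTLfinally}^{j}$ I split on whether $\trans\models\fgphi_j$: if so, the first disjunct is witnessed via Proposition~\ref{prop:anotfg}; if $\trans\models\gfphi_j$ but $\trans\not\models\fgphi_j$, the first disjunct is impossible (again by Proposition~\ref{prop:anotfg}) and the second is witnessed by the standard bounded-synthesis correspondence applied to the co-B\"uchi automaton $\autA_j$; if $\trans\not\models\gfphi_j$, neither disjunct is realizable. Optimality of $a$ then forces each realizable soft clause to actually be satisfied, yielding the claimed sum.

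For \emph{(ii)} I would observe that the semantic implications $\gphi_j\Rightarrow\fgphi_j\Rightarrow\gfphi_j$ yield $v_{j,3}\leq v_{j,2}\leq v_{j,1}$, hence $v_3\leq v_2\leq v_1$, and then verify strict monotonicity by a short case analysis on which coordinate witnesses the lex inequality: for instance, when $v_1<v_1'$, the monotonicity forces $v_2,v_3\leq n-1$, so the worst-case loss $n(v_2'-v_2)+(v_3'-v_3)\geq -(n^2-1)$ is strictly outweighed by the gain $n^2(v_1'-v_1)\geq n^2$, and the other two cases are analogous. The main obstacle will be the $\soft_{\tiny\LTLglobally\LTLfinally}^{j}$ sub-case of step \emph{(i)}: since that clause is a disjunction over two independent annotation functions, I must verify that optimality of $a$ selects whichever disjunct is semantically realizable, and simultaneously that having the first disjunct true is compatible with (indeed already implies) satisfaction of $\soft_{\tiny\LTLfinally\LTLglobally}^{j}$, so that the collected weight truly matches $n^2 v_1 + n v_2 + v_3$ rather than overcounting or undercounting a soft clause.
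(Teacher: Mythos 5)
Your proposal is correct and follows essentially the same route as the paper's proof: establish that under an optimal consistent assignment the collected weight equals $v_3 + n\,v_2 + n^2\,v_1$, then show by the same three-way case analysis (using the monotonicity $v_3 \leq v_2 \leq v_1$ forced by $\gphi_j \Rightarrow \fgphi_j \Rightarrow \gfphi_j$) that this weighted sum is a strict order-embedding of the lexicographic order. The only difference is that you spell out step \emph{(i)} — the clause-by-clause correspondence between soft-clause satisfiability and the semantic relaxations via Propositions~\ref{prop:rej-trans} and~\ref{prop:anotfg}, including the non-overcounting of the disjunctive clause — which the paper asserts in one sentence without detailed justification.
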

\begin{proof}
Let $(v_1',v_2',v_3')  = \val{\trans'}{\gphi_1 \wedge \ldots \wedge \gphi_n}$ and 
$(v_1'',v_2'',v_3'')  = \val{\trans''}{\gphi_1 \wedge \ldots \wedge \gphi_n}$. This means that there are exactly $v_3'$ distinct indices $i \in \{1,\ldots,n\}$ such that $\trans' \models \gphi_i$,  $v_2'$ distinct indices $i \in \{1,\ldots,n\}$ such that $\trans' \models \fgphi_i$ and $v_1'$ distinct indices $i \in \{1,\ldots,n\}$ such that $\trans' \models \gfphi_i$. Since $a'$ is an optimal satisfying assignment corresponding to $\trans'$, we have that $a'$ satisfies exactly $v_3'$ of the soft clauses $\soft_{\tiny\LTLglobally}^{j}$, exactly $v_2'$ of the soft clauses $\soft_{\tiny\LTLfinally\LTLglobally}^{j}$ and exactly $v_1'$ of the soft clauses $\soft_{\tiny\LTLglobally\LTLfinally}^{j}$. This means that $w' = v_3' + v_2' \cdot n + v_1' \cdot n^2$. In a similar way we can conclude that $w'' = v_3'' + v_2'' \cdot n + v_1'' \cdot n^2$ holds for $\trans''$.

First, suppose that $(v_1',v_2',v_3') < (v_1'',v_2'',v_3'')$. There are three possible cases:

\noindent
\emph{Case 1:} $v_1' = v_1''$, $v_2' = v_2''$ and $v_3' < v_3''$. Then, $w'' - w' = (v_3'' - v_3') > 0$.

\noindent
\emph{Case 2:} $v_1' = v_1''$ and $v_2' < v_2''$. Then, $w'' - w' = (v_2'' - v_2')\cdot n  + (v_3'' - v_3')$.
Since $\trans' \models \gphi_i$ implies $\trans' \models \fgphi_i$, we have that $v_3' - v_3'' \leq n-1$, due to the fact that $v_2'' - v_2' \geq 1$. Thus, we conclude $w'' - w' \geq n - (n-1) = 1 >0$.

\noindent
\emph{Case 3:} $v_1' < v_1''$. Now, $w'' - w' = (v_1'' - v_1')\cdot n^2 + (v_2'' - v_2')\cdot n  + (v_3'' - v_3')$. Again, since $\trans' \models \gphi_i$ implies $\trans' \models \gfphi_i$ and 
$\trans' \models \fgphi_i$ implies $\trans' \models \gfphi_i$, we have that $v_3' - v_3'' \leq n-1$ and 
$v_2' - v_2'' \leq n-1$, both due to the fact that $v_1'' - v_1' \geq 1$. Thus, we conclude $w'' - w' \geq n^2 - (n-1)\cdot n -(n-1) = 1 >0$.

In all three cases we showed that $w' < w''$.

For the other direction, suppose that $w' < w''$. If we assume that $(v_1',v_2',v_3') \geq (v_1'',v_2'',v_3'')$, then we can show as above that $w'' \geq w'$, which is a contradiction. Hence, we have that $(v_1',v_2',v_3') < (v_1'',v_2'',v_3'')$, which concludes the proof.\qed
\end{proof}

This in turn guarantees that  a transition system extracted from an optimal satisfying assignment for the MaxSAT problem is optimal with respect to the value of $\gphi_1 \wedge \ldots \wedge \gphi_n$, as stated in the following theorem that establishes the correctness of the encoding.
\begin{theorem}\label{thm:encoding-correctness}
Let $\autA$ be a given co-B\"uchi automaton for $\varphi$, and for each $j \in \{1,\ldots,n\}$, let $\autB_j = \relaxfg(\gphi_j)$ be the universal automaton for $\gphi_j$ constructed as in Section~\ref{sec:automata-safety}, and let $\autA_j$ be a universal co-B\"uchi automaton for $\gfphi_j$. 
The constraint system for bound $b \in \nats_{>0}$ is satisfiable if and only if there exists an implementation $\trans$ with $|\trans| \leq b$  such that $\trans \models \varphi$. Furthermore, from the optimal satisfying assignment to the variables $\tau_{s,\inpval,s'}$ and $o_{s,\inpval}$, one can extract a transition system $\trans^*$ such that for every transition system $\trans$ with $|\trans| \leq b$ and $\trans \models \varphi$ it holds that $\val{\trans^*}{\gphi_1 \wedge \ldots \wedge \gphi_n} \geq \val{\trans}{\gphi_1 \wedge \ldots \wedge \gphi_n}$.
\end{theorem}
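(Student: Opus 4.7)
The plan is to prove the two halves of the theorem separately: first the satisfiability equivalence, and then the correspondence between optimal satisfying assignments and optimal transition systems. Both halves rest on the Finkbeiner--Schewe correctness result for the encoding of $C_\tau$ and $C_\anot$, Proposition~\ref{prop:anotfg} (and its analogue for co-B\"uchi annotations, which is exactly the original bounded-synthesis annotation result applied to $\autA_j$), and Lemma~\ref{lem:encoding-weights}.

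For satisfiability, I would argue that the hard clauses $C_\tau \wedge C_\anot$ form exactly the Finkbeiner--Schewe encoding of bounded synthesis for $\varphi$ at bound $b$, while each $C_\anotfg^j$ and $C_\anot^j$ is a conjunction of implications whose antecedent contains a Boolean reachability flag ($\anotfgbj_{s,q}$ or $\anotbj_{s,q}$). Hence any assignment realizing the standard encoding can be extended to the soft-specification annotation variables by setting every such flag outside the initial nodes to $\false$, which vacuously satisfies $C_\anotfg^j$ and $C_\anot^j$. Conversely, any satisfying assignment gives a $\trans$ with $|\trans| \leq b$ satisfying $\varphi$ by the Finkbeiner--Schewe direction. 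This yields the first claim.

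For the optimality claim, I would show two directions. First, given \emph{any} $\trans$ with $|\trans| \leq b$ and $\trans \models \varphi$, I would construct a satisfying assignment whose total soft-clause weight is exactly $v_3 + v_2 \cdot n + v_1 \cdot n^2$ for $(v_1,v_2,v_3) = \val{\trans}{\gphi_1 \wedge \ldots \wedge \gphi_n}$: encode $\trans$ into the variables $\tau_{s,\inpval,s'}$, $o_{s,\inpval}$ and the annotation $\anot$ for $\autA$; for each $j$ with $\trans \models \gphi_j$, use Proposition~\ref{prop:anotfg} (together with the observation following it that $\trans \models \gphi_j$ admits a $b$-bounded $\anotfg_j$ with $\anotfg_j(s_0,q_0) = b$) to satisfy all three of $\soft_{\tiny\LTLglobally}^{j},\soft_{\tiny\LTLfinally\LTLglobally}^{j},\soft_{\tiny\LTLglobally\LTLfinally}^{j}$; for $\trans \models \fgphi_j$ but not $\gphi_j$ use the $b$-bounded $\anotfg_j$ with $\anotfg_j(s_0,q_0) < b$ to satisfy the latter two; for $\trans \models \gfphi_j$ but not $\fgphi_j$ use a valid annotation $\anot_j$ for $\autA_j$ to satisfy just $\soft_{\tiny\LTLglobally\LTLfinally}^{j}$. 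Second, given an optimal satisfying assignment, I extract $\trans^*$ from $\tau_{s,\inpval,s'},o_{s,\inpval}$; then $\trans^* \models \varphi$ follows from $C_\anot$, and whenever $\soft_{\tiny\LTLglobally}^{j}$ (resp.\ $\soft_{\tiny\LTLfinally\LTLglobally}^{j}$, resp.\ $\soft_{\tiny\LTLglobally\LTLfinally}^{j}$) is satisfied, Proposition~\ref{prop:anotfg} (resp.\ its variant, resp.\ the standard result for $\autA_j$) gives $\trans^* \models \gphi_j$ (resp.\ $\fgphi_j$, resp.\ $\gfphi_j$). Combining these two directions with Lemma~\ref{lem:encoding-weights} shows that maximizing the MaxSAT weight coincides with maximizing $\val{\cdot}{\gphi_1 \wedge \ldots \wedge \gphi_n}$ in the lexicographic order over transition systems of size at most $b$ that realize $\varphi$.

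The main obstacle is the first direction of the optimality argument: I need to check that the three chained soft clauses are actually realizable \emph{simultaneously} at the level dictated by $\val{\trans}{\gphi_j}$, i.e., that the same annotation witness $\anotfg_j$ can be reused for both $\soft_{\tiny\LTLglobally}^{j}$ and $\soft_{\tiny\LTLfinally\LTLglobally}^{j}$, and that switching from $\anotfg_j$ to $\anot_j$ when $\trans \not\models \fgphi_j$ but $\trans \models \gfphi_j$ is legitimate (so $\soft_{\tiny\LTLglobally\LTLfinally}^{j}$ can still be satisfied via $\anotbj_{s_0,\widehat q_0}$ even when $\anotfgbj_{s_0,q_0}$ is forced to $\false$). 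These points are exactly what the definitions in Section~\ref{sec:automata-safety} and the shape of the disjunction in $\soft_{\tiny\LTLglobally\LTLfinally}^{j}$ were designed for, so the verification is straightforward but needs to be done carefully to make sure the monotone weight assignment $1 < n < n^2$ from Lemma~\ref{lem:encoding-weights} propagates correctly.
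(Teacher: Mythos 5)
Your proposal is correct and follows essentially the same route as the paper: the first claim is discharged exactly as you describe, by noting that the soft-specification annotation constraints are guarded implications that become vacuous when all reachability flags are set to false, and the second claim is obtained by comparing the weight of the globally optimal assignment against the weight of an optimal assignment consistent with an arbitrary competitor $\trans$ and invoking Lemma~\ref{lem:encoding-weights}. The extra verification you flag at the end (that one annotation $\anotfg_j$ witnesses both $\soft_{\tiny\LTLglobally}^{j}$ and $\soft_{\tiny\LTLfinally\LTLglobally}^{j}$, and that the disjunction in $\soft_{\tiny\LTLglobally\LTLfinally}^{j}$ lets $\anot_j$ take over when $\anotfgbj_{s_0,q_0}$ must be false) is precisely the content the paper delegates to the proof of Lemma~\ref{lem:encoding-weights}, so nothing is missing.
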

\begin{proof}
	The first part of the claim follows from the correctness of the classical bounded synthesis approach. More precisely, if the constraint system is satisfiable, then there exists a satisfying assignment, which in particular, satisfies the constraints asserting the existence of a transition system $\trans$ of size less than or equal to $b$, and the existence of a valid annotation for the run graph of $\autA$ on $\trans$. If, on the other hand, there exists a transition system $\trans$ such that $|\trans| \leq b$ and $\trans \models \varphi$, then there exists a variable assignment $a$ consistent with $\trans$ that satisfies the constraints asserting the existence of a valid annotation for the run graph of $\autA$ on $\trans$. It remains to show that $a$ can be chosen in a way that satisfies the remaining hard constraints as well. To see that, notice that all the constraints for the annotations $\anotfg_j$ and $\anot_j$ can be satisfied (not necessarily in an optimal way) by setting all the variables $\anotfgbj_{s,q}$ and $\anotbj_{s,q}$ to $\falseval$. This completes the proof of the first statement.
	
	Now, let $a^*$ be an optimal solution to the MaxSAT problem, and $\trans^*$ be the transition system extracted from $a^*$. Consider a transition system $\trans$ such that $|\trans| \leq b$ and $\trans \models \varphi$. Then, as we showed above, there exists a satisfying assignment $a$ consistent with $\trans$. Let $w^*$ be the sum of the weights of the soft clauses satisfied by $a^*$, and $w$ be the sum of the weighs of the soft clauses satisfied by $a$. Since $a^*$ is an optimal satisfying assignment, we have that $w \leq w^*$. Thus, by applying Lemma~\ref{lem:encoding-weights} we obtain $\val{\trans^*}{\gphi_1 \wedge \ldots \wedge \gphi_n} \geq \val{\trans}{\gphi_1 \wedge \ldots \wedge \gphi_n}$, which concludes the proof of the second claim of the theorem.\qed
\end{proof}

\begin{wrapfigure}{r!}{0.3\textwidth}
	\centering
	\vspace{-0.9cm}
	\scalebox{.7}{
	\begin{tikzpicture}[node distance=1.5 cm,auto,>=latex',line join=bevel,transform shape]
\node[circle,draw] at (0,0) (s0) {$s_0$};
\node  [left of=s0,xshift=.5cm] (ini) {};
\node  [below left of =s0,yshift=-1cm,circle,draw] (s1) {$s_1$};
\node  [below right of =s0,yshift=-1cm,circle,draw] (s2) {$s_2$};
\draw [->] (s0) edge[loop above] node[right] {$\neg \mathit{r1}\wedge\neg \mathit{r2}$}(s0);
\draw [->] (s0) edge[bend right] node[above,sloped] {$\mathit{r1}\wedge\neg \mathit{r2}$} (s1);
\draw [->] (s0) edge node[left,near end] {$\mathit{r2}$} (s2);
\draw [->] (s1) edge node[below,sloped] {$\tiny\neg \mathit{r1}\wedge\neg \mathit{r2}$}(s0);
\draw [->] (s1) edge[loop left] node[below,yshift=-.3cm] {$\tiny\mathit{r1}\wedge\neg \mathit{r2}$}(s1);
\draw [->] (s1) edge node[above,midway] {$\mathit{r2}$} (s2);	
\draw [->] (s2) edge[bend right] node[right] {$\neg\mathit{r1}$}(s0);
\draw [->] (s2) edge[bend left] node[below] {$\mathit{r1}$} (s1);
\draw [->] (ini) edge (s0);
\end{tikzpicture}}
	\caption{An optimal implementation for Example~\ref{ex:two-safety}. }
	\label{fig:waiter-strategy}
\end{wrapfigure}
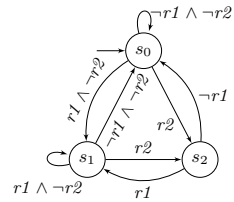\interlinepenalty 10000
Figure~\ref{fig:waiter-strategy} shows a transition system extracted from an optimal satisfying assignment for Example~\ref{ex:two-safety} with bound $3$ on the implementation size. The  transitions depicted in the figure are defined by the values of the variables $\tau_{s,\inpval,s'}$. The outputs of the implementation (omitted from the figure) are defined by the values of $o_{s,\inpval}$. The output in state $s_1$ when $\mathit{r1}$ is $\trueval$ is $\mathit{table1}\wedge \neg\mathit{table2}$, and the output in $s_2$ when $\mathit{r2}$ is $\trueval$ is $\neg\mathit{table1}\wedge \mathit{table2}$. For all other combinations of state and input, the output is $\neg\mathit{table1}\wedge \neg\mathit{table2}$.
\interlinepenalty 0

\clearpage
The next proposition establishes the size of the MaxSAT encoding.
\begin{proposition}\label{prop:encoding-size}
	Let $\autA$ be a given co-B\"uchi automaton for $\varphi$, and for each $j \in \{1,\ldots,n\}$, let $\autB_j = \relaxfg(\gphi_j)$ be the universal B\"uchi automaton for $\gphi_j$ constructed as in Section~\ref{sec:automata-safety}, and let $\autA_j$ be a universal co-B\"uchi automaton for $\gfphi$. 
	The constraint system for bound $b \in \nats$ has weights in $\mathcal{O}(n^2)$. It has
	\[
	\begin{array}{l}
	\mathcal{O}\Big(
	(b^2 + b \cdot |\outv|)\cdot 2^{|\inpv|} +  
	b \cdot |Q|\cdot (1+ \log(b \cdot |Q|)) +\\ 
	\phantom{\mathcal{O}(}\sum_{j=1}^n\big(b \cdot |Q_j| (1 + \log(b\cdot |Q_j|))\big)
	+ \sum_{j=1}^n\big(b \cdot |\widehat Q_j| (1 + \log(b\cdot |\widehat Q_j|))\big)
	\Big)
	\end{array}\]
	variables, and its size (before conversion to CNF) is
	\[
	\begin{array}{l}
	\mathcal{O}\Big(
	|Q|^2 \cdot b^2 \cdot 2^{|\inpv|} \cdot (d + \log(b\cdot |Q|)) + \\
	\phantom{\mathcal{O}(}\sum_{j=1}^n\big(|Q_j|^2 \cdot b^2 \cdot 2^{|\inpv|} \cdot (d_j + r_j + \log(b\cdot |Q_j|))\big) +
	\\
	\phantom{\mathcal{O}(}\sum_{j=1}^n\big(|\widehat Q_j|^2 \cdot b^2 \cdot 2^{|\inpv|} \cdot (\widehat d_j + \log(b\cdot |\widehat Q_j|))\big)
	\Big), 
	\end{array}\]
	\[
	\begin{array}{lllllll}
	\text{ where} &d &=& \max_{s,q,\inpval,q'}|\delta_{s,q,\inpval,q'}|,&
	d_j &=& \max_{s,q,\inpval,q'}|\delta_{s,q,\inpval,q'}^j|,\\
	&\widehat d_j &=& \max_{s,q,\inpval,q'}|\widehat \delta_{s,q,\inpval,q'}^j|, \text{ and } &
	r_j &=& \max_{s,q,\inpval,q'}|\rej^j(s,q,q',\inpval)|.
	\end{array}
	\]
\end{proposition}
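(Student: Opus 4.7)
The proposition is a direct bookkeeping exercise over the constraint system introduced in Section~\ref{sec:maxsat}: count the Boolean variables, count the size contribution of each hard and soft constraint block, and check that the weights used for soft clauses are bounded by $n^2$. I would organize the argument into four short steps, one per category of objects in the encoding, and then sum them up.

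\textbf{Step 1: weights.} The only soft clauses are $\soft_{\tiny\LTLglobally}^{j}$, $\soft_{\tiny\LTLfinally\LTLglobally}^{j}$ and $\soft_{\tiny\LTLglobally\LTLfinally}^{j}$, with weights $1$, $n$ and $n^2$, respectively. Hence every weight lies in $\mathcal{O}(n^2)$, which is the first claim.

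\textbf{Step 2: variable count.} The transition system variables contribute $\tau_{s,\inpval,s'}$ ($b^2 \cdot 2^{|\inpv|}$ of them) and $o_{s,\inpval}$ ($b \cdot |\outv|\cdot 2^{|\inpv|}$ of them), giving the first summand $(b^2 + b\cdot|\outv|)\cdot 2^{|\inpv|}$. For the $\autA$-annotation we have a Boolean $\anotb_{s,q}$ for each of the $b\cdot|Q|$ nodes, plus a counter $\anotn_{s,q}$ of width $\log(b\cdot|F|)\le \log(b\cdot|Q|)$, producing $b\cdot|Q|\cdot(1+\log(b\cdot|Q|))$ Boolean variables. The $\autB_j$- and $\autA_j$-annotations are counted analogously with $|Q_j|$ and $|\widehat Q_j|$ in place of $|Q|$ and summed over $j=1,\ldots,n$. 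Adding the four contributions gives exactly the stated bound on the number of variables.

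\textbf{Step 3: constraint size.} I would measure each block separately before taking the sum. The input-enabledness constraint $C_\tau$ is a conjunction of $b\cdot 2^{|\inpv|}$ disjunctions of width $b$, which is dominated by the annotation constraints. For $C_\anot$ one iterates over $|Q|^2\cdot b^2 \cdot 2^{|\inpv|}$ quadruples $(q,q',s,s',\inpval)$; each conjunct contains the formula $\delta_{s,q,\inpval,q'}$ of size at most $d$ together with the comparison $\succa_\anot$ on counters of width $\log(b\cdot|Q|)$, so each quadruple contributes $\mathcal{O}(d+\log(b\cdot|Q|))$. This matches the first summand of the size formula. For $C_\anotfg^{j}$ the same pattern applies, except that $\succa_\anotfg^j$ also embeds the formula $\rej^j(s,q,q',\inpval)$ of size at most $r_j$, yielding the term $|Q_j|^2\cdot b^2\cdot 2^{|\inpv|}\cdot(d_j+r_j+\log(b\cdot|Q_j|))$. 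The analogous calculation for $C_\anot^{j}$ yields the $\widehat Q_j$ term. The soft clauses $\soft_{\tiny\LTLglobally}^{j}$, $\soft_{\tiny\LTLfinally\LTLglobally}^{j}$, $\soft_{\tiny\LTLglobally\LTLfinally}^{j}$ add only $\mathcal{O}(n\cdot\log(b\cdot\max_j|Q_j|))$ which is absorbed into the dominant terms. Summing the three annotation blocks gives the claimed size.

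\textbf{Step 4: combine.} After verifying that the $\mathcal{O}$-bounds of the individual blocks add up, the proposition follows. The only substantive subtlety — and the only step I expect to require care — is getting the counter widths and the encoding of the comparisons $\anotnj_{s',q'}>\anotnj_{s,q}$ and $\anotnj_{s',q'}\ge\anotnj_{s,q}$ right, because each such comparison over $\log(b\cdot|Q_j|)$-bit counters contributes the logarithmic factor that appears inside the parentheses; everything else is essentially mechanical counting. No completeness or correctness argument is needed, since the proposition is a purely syntactic statement about the encoding of Section~\ref{sec:maxsat}.
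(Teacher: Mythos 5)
Your proposal is correct and follows essentially the same route as the paper's own proof: an itemized count of the variable groups ($\tau$, $o$, and the three families of annotation variables) followed by a block-by-block size estimate of $C_\tau$, $C_\anot$, $C_\anotfg^{j}$, $C_\anot^{j}$ and the soft clauses, with the weights $1$, $n$, $n^2$ trivially bounded by $\mathcal{O}(n^2)$. The individual bounds you give for each block match the paper's exactly, so nothing further is needed.
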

\begin{proof}
The constraint system is defined in terms of the following variables:
\begin{itemize}
\item Boolean variables $\tau_{s,\inpval,s'}$ and $o_{s,\inpval}$ representing the transition system. The total number of these variables is $b^2 \cdot 2^{|\inpv|} + b \cdot |\outv| \cdot 2^{|\inpv|}$.
\item Boolean variables $\anotb_{s,q}$ and vectors of Boolean variables $\anotn_{s,q}$ representing the annotation $\anot$. The total number of bits is $b \cdot |Q| +b \cdot |Q|\cdot \log(b \cdot |Q|)$.
\item Boolean variables $\anotfgbj_{s,q}$ and vectors of Boolean variables $\anotfgnj_{s,q}$ representing the annotations $\anotfg_j$. The total number of bits is $\sum_{j=1}^n\big(b \cdot |Q_j| (1 + \log(b\cdot |Q_j|))\big)$.
\item Boolean variables $\anotbj_{s,q}$ and vectors of Boolean variables $\anotnj_{s,q}$ representing the annotation $\anot_j$. The total number of bits is $\sum_{j=1}^n\big(b \cdot |\widehat Q_j| (1 + \log(b\cdot |\widehat Q_j|))\big)$. 
\end{itemize}

The sum of the above quantities yields the total number of Boolean variables.

The constraint system consists of the following constraints:
\begin{itemize}
\item Constraints $C_\tau$ encoding input-enabledness, of size $b^2 \cdot 2^{|\inpv|}$.
\item Constraints $C_\anot$ for valid annotation $\anot$ of size $\mathcal{O}\big(|Q|^2 \cdot b^2 \cdot 2^{|\inpv|} \cdot (d + \log(b\cdot |Q|))\big)$.
\item Hard constraints $C_\anotfg^{j}$ for valid annotations $\anotfg_{j}$, of size \[\mathcal{O}\big(\sum_{j=1}^n\big(|Q_j|^2 \cdot b^2 \cdot 2^{|\inpv|} \cdot (d_j + r_j + \log(b\cdot |Q_j|))\big)\big).\]
\item Hard constraints $C_\anot^{j}$ for valid annotations $\anot_{j}$, of size \[\mathcal{O}\big(\sum_{j=1}^n\big(|\widehat Q_j|^2 \cdot b^2 \cdot 2^{|\inpv|} \cdot (\widehat d_j + \log(b\cdot |Q_j|))\big)\big).\]
\item Soft constraints $\soft_{\tiny\LTLglobally}^{j}$, $\soft_{\tiny\LTLfinally\LTLglobally}^{j}$ and $\soft_{\tiny\LTLglobally\LTLfinally}^{j}$ for valid annotations, of size
\[\mathcal{O}\big(\sum_{j=1}^n\big(|\log(b\cdot |Q_j|)\big)\big).\]
\end{itemize}

Summing up, we obtain the total size of the constraint system.\qed
\end{proof}

\subsection{Generalizations of the Maximum Realizability Problem}\label{sec:generalizations}
	
\subsubsection{Maximum Realizability with Soft LTL Specifications}\label{sec:soft-ltl}

The first generalization of the maximum realizability problem that we consider is the setting where the soft specifications can be arbitrary LTL formulas, and not just safety properties of specific form. More precisely, each soft specification $\softSpec$ is an LTL formula for which we are also given a vector $\Relax(\softSpec)$ of LTL formulas that defines the possible relaxations of $\softSpec$. Formally, $\Relax(\softSpec) = (\psi_{1},\ldots,\psi_{m})$, where $\psi_{1} = \softSpec$, and for every $1 \leq k < m$, it holds that $\trans \models \psi_{k}$ implies $\trans \models \psi_{k+1}$ for every transition system $\trans$. That is, $\psi_{1},\ldots,\psi_{m}$ are ordered according to strength.
In particular, if $\trans \models \softSpec$, then $\trans \models \psi_k$ for each $\psi_k$ in $\Relax(\softSpec)$.
For example, if $\softSpec = \LTLglobally p$ for some atomic proposition $p$, we can take $\Relax(\softSpec) = (\LTLglobally p,\LTLfinally\LTLglobally p,\LTLglobally\LTLfinally p)$.

As in Section~\ref{sec:quantitative-semantics}, we define the value of $\softSpec$ for given $\Relax(\softSpec) = (\psi_{1},\ldots,\psi_{m})$ to be $\val\trans\softSpec = (v_1,\ldots,v_m)$, where $v_k = 1$ if $\trans \models \psi_{(m+1)-k}$, and $v_k = 0$ otherwise.
For a conjunction $\softSpec_1\wedge\ldots\wedge\softSpec_n$ of soft specifications with given $\Relax(\softSpec_j) = (\psi_{j,1},\ldots,\psi_{j,m})$ for each $j \in \{1,\ldots,n\}$, we define the value
$\val\trans{\softSpec_1 \wedge \ldots \wedge \softSpec_n} = \big(
\sum_{i=1}^n v_{i,1},
\ldots,
\sum_{i=1}^n v_{i,m}
\big),$ where
$\val\trans{\softSpec_i} = (v_{i,1},\ldots,v_{i,m})$ for $i \in \{1,\ldots,n\}$.

The maximum realizability problem asks for a given LTL specification $\spec$ and soft LTL specifications 
$\softSpec_1,\ldots,\softSpec_n$ with given $\Relax(\softSpec_j) = (\psi_{j,1},\ldots,\psi_{j,m})$, to determine whether there exists a transition system $\trans$ such that $\trans\models\spec$, and if the answer is positive, to construct a transition system $\trans^*$ such that $\trans^* \models \spec$, and for every $\trans$ with $\trans\models\spec$, it holds that $\val\trans{\softSpec_1 \wedge \ldots \wedge \softSpec_n} \leq \val{\trans^*}{\softSpec_1 \wedge \ldots \wedge \softSpec_n}$.
The bounded maximum realizability problem is defined in the straightforward way.

We can adapt the MaxSAT approach from Section~\ref{sec:maxsat} to solve the bounded maximum realizability problem in this setting as follows. 

First, for each $\psi_{j,k}$ in $\Relax(\softSpec_j)$, we construct a universal co-B\"uchi automaton $\autA_{j,k} = (Q_{j,k},q_0^{j,k},\delta_{j,k},F_{j,k})$ such that $\trans \in \lang{\autA_{j,k}}$ if and only if $\trans \models \psi_{j,k}$.
In the MaxSAT encoding, the hard constraints for the annotation $\anot_{j,k}$  are
\begin{align*}
C_{j,k}\defeq\bigwedge_{q,q' \in Q_{j,k}}\bigwedge_{s,s' \in S}\bigwedge_{\inpval \in \ialphabet}
\Bigg( &
\big(
\anotbjk_{s,q} \wedge
\delta^{j,k}_{s,q,\inpval,q'}\wedge 
\tau_{s,\inpval,s'}
\big) \rightarrow 
\succa_{\anot}^{j,k}(s,q,s',q',\inpval)
\Bigg).
\end{align*}
Generalizing the encoding, for each $j \in \{1,\ldots,n\}$ and $k \in \{1,\ldots,m\}$, we now have one soft constraint
$\soft_{j,k}\defeq  \bigvee_{l=1}^k \anotbjl_{s_0,q^{j,l}}$ with weight $n^{k-1}$.

\subsubsection{Maximum Realizability with Priorities}\label{sec:priorities}
	
In the definitions in Section~\ref{sec:prob-form} and the paragraph above, all soft specifications have the same priority. Now, we extend the maximum realizability setting to the case with priorities for the soft specifications given as part of the input to the problem.

\smallskip
\noindent
{\it Soft specifications with priority ordering.} We begin with a simple setting where soft specifications are simply ordered in decreasing priority, without assigning any numerical weight for the preferences over the formulas. More specifically, we assume that the soft specifications $\softSpec_1,\ldots,\softSpec_n$ are ordered such that, for every $i \in \{1,\ldots,n\}$, we have that $\softSpec_i$ has higher priority than $\softSpec_j$ for all $j > i$. 

Now, given a vector $\Relax(\softSpec_j) = (\psi_{j,1},\ldots,\psi_{j,m})$ for $\softSpec_j$  we define the value of $\softSpec_j$ in a transition system $\trans$ to be the number of specifications in $\Relax(\softSpec_j)$ satisfied by $\trans$, i.e., $\val\trans{\softSpec_j} = | \{k \in \{1,\ldots,m\} \mid \trans\models \psi_{j,k}\}|$. The value of $\softSpec_1\wedge\ldots\wedge\softSpec_n$ is then defined as $\val\trans{\softSpec_1 \wedge \ldots \wedge \softSpec_n} = (\val\trans{\softSpec_1},\ldots,\val\trans{\softSpec_n})$. The values of transition systems are compared according to the lexicographic ordering of vectors in $\{0,\ldots,m\}^n$, thus giving priority to $\softSpec_i$ over $\softSpec_j$ for $i < j$.

The MaxSAT approach can be adapted for this value function in the same way as above. The difference is in the weights of the soft constraints for the annotations $\anot_{j,k}$: for each
$j \in \{1,\ldots, n\}$ and $k \in \{1,\ldots,m\}$ we have a soft constraint  $\soft_{j,k} \; \defeq \;  \bigvee_{l=1}^k \anotbjl_{s_0,q^{j,l}}$ with weight $w_{j,k}$, where 
$w_{j,k} = 1 $ if $j=n$ or $k < m$, and $w_{j,k} = \sum_{j' = j+1}^n \sum_{k = 1}^m w_{j',k} + 1$ otherwise.

\smallskip
\noindent
{\it Soft specifications with given weights.}
We also consider the weighted maximum realizability problem, in which, together with $\Relax(\softSpec)$ for each soft specification $\softSpec$, the user also provides numerical weights for the formulas in $\Relax(\softSpec)$. That is, for each $j \in \{1,\ldots,n\}$ and $k \in \{1,\ldots,m\}$, we are given a weight $w_{j,k}$ for $\psi_{j,k}$. These weights specify the priority of each of the soft specifications.

The MaxSAT formulation is then adapted to incorporate the given weights, by using them for the corresponding soft constraints. Namely, for each $j$ and $k$, the corresponding soft constraint $\soft_{j,k}  \; \defeq \;  \bigvee_{l=1}^k \anotbjl_{s_0,q^{j,l}}$ has weight $w_{j,k}$.

\begin{theorem}\label{thm:optimal-bound-general}
	Given an LTL specification $\spec$ and soft specifications $\gphi_1,\ldots,\gphi_n$ together with a vector of formulas $\Relax(\softSpec_j) = (\psi_{j,1},\ldots,\psi_{j,m})$ for each $\softSpec_j$,
	if there is a transition system $\trans$ with  $\trans \models \spec$, then there exists $\trans^*$ such that:
	\begin{itemize}
		\item $\val\trans{\softSpec_1 \wedge \ldots \wedge \softSpec_n} \leq \val{\trans^*}{\softSpec_1 \wedge \ldots \wedge \softSpec_n}$ for all $\trans$ with $\trans \models \spec$, and
		\item $\trans^* \models \spec$ and $|\trans^*| \leq (2^{b+\log b})!^2$,
	\end{itemize}
	where $b = \max\{|\subf{\spec\wedge\softSpec_1'\wedge\ldots\wedge\softSpec_n'}| \mid \softSpec_i' \in \Relax(\softSpec_i) \text{ for }i=1,\ldots,n\}$.
\end{theorem}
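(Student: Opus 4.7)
The plan is to mirror the two-step structure of the proof of Theorem~\ref{thm:optimal-bound-safety}: first generalize Lemma~\ref{lem:value-as-ltl} to this setting, and then invoke the classical bounded-synthesis size bound of Schewe and Finkbeiner.

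For the generalized lemma, I would show that for every transition system $\trans$ with $v = \val{\trans}{\softSpec_1 \wedge \ldots \wedge \softSpec_n}$ there is an LTL formula $\psi_v = \softSpec_1' \wedge \ldots \wedge \softSpec_n'$ with each $\softSpec_j' \in \Relax(\softSpec_j) \cup \{\true\}$ such that $\trans \models \psi_v$ and every $\trans'$ with $\trans' \models \psi_v$ achieves value at least $v$. The construction sets $\softSpec_j' = \psi_{j,k_j}$, where $k_j$ is the smallest index with $\trans \models \psi_{j,k_j}$, and $\softSpec_j' = \true$ if no such index exists. Then $\trans \models \psi_v$ by construction, and for any $\trans' \models \psi_v$ the monotonicity built into $\Relax(\softSpec_j)$ (namely, $\trans \models \psi_{j,k}$ implies $\trans \models \psi_{j,k+1}$) forces $\trans' \models \psi_{j,k}$ for every $k \geq k_j$. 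Hence the set of relaxations of $\softSpec_j$ satisfied by $\trans'$ contains the one satisfied by $\trans$, yielding componentwise domination of the per-specification satisfaction data.

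The main argument then follows exactly as in Theorem~\ref{thm:optimal-bound-safety}: let $v^*$ be the maximum of $\val\trans{\softSpec_1 \wedge \ldots \wedge \softSpec_n}$ over all $\trans \models \spec$ (this maximum is attained because values lie in a finite set and $\spec$ is realizable by assumption). For any $\trans$ achieving $v^*$, the lemma provides $\psi_{v^*}$ with $\trans \models \spec \wedge \psi_{v^*}$. The classical bound from~\cite{ScheweF07a} then yields a realizer $\trans^*$ of $\spec \wedge \psi_{v^*}$ satisfying $|\trans^*| \leq \big((2^{|\subf{\spec \wedge \psi_{v^*}}|+\log|\spec \wedge \psi_{v^*}|})!\big)^2 \leq ((2^{b+\log b})!)^2$, where the last inequality follows because $\psi_{v^*}$ is a conjunction of one formula from each $\Relax(\softSpec_j) \cup \{\true\}$; the $\true$ fallbacks contribute only a single symbol and can be absorbed into $b$, or avoided by treating $\true$ as a canonical weakest relaxation appended to every $\Relax(\softSpec_j)$. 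The lemma then gives $\val{\trans^*}{\softSpec_1 \wedge \ldots \wedge \softSpec_n} \geq v^*$, and optimality of $v^*$ yields equality.

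The main obstacle is checking that componentwise domination of satisfied-relaxation sets translates into the value ordering used, uniformly across the three variants described in Section~\ref{sec:generalizations}. For the lex-ordered sum vectors of Section~\ref{sec:soft-ltl} and the weighted-sum variant of Section~\ref{sec:priorities}, monotonicity is immediate because positive aggregation preserves coordinatewise $\geq$. For the priority-based lex ordering of Section~\ref{sec:priorities}, containment of relaxation sets gives componentwise $\geq$ on per-specification cardinalities, which is monotone under the lex order on $\{0,\ldots,m\}^n$. All three cases thus reduce to a single coordinate-wise monotonicity check of the chosen aggregation.
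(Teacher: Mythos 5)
Your proposal matches the paper's proof: the paper likewise reduces the theorem to a generalized version of Lemma~\ref{lem:value-as-ltl} (stated as Lemma~\ref{lem:value-as-ltl-general}, with proof declared analogous) and then invokes the Schewe--Finkbeiner size bound exactly as in Theorem~\ref{thm:optimal-bound-safety}. Your explicit construction of $\psi_v$ via the strongest satisfied relaxation and the monotonicity check across the aggregation variants fills in details the paper leaves implicit, but the route is the same.
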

\begin{proof}
	The proof is a generalization of the proof of Theorem~\ref{thm:optimal-bound-safety}. First, we need to establish the analogue of Lemma~\ref{lem:value-as-ltl} for the general case. 
	
	\begin{lemma}\label{lem:value-as-ltl-general}
		For every transition system $\trans$, soft specifications $\gphi_1,\ldots,\gphi_n$, and vector of formulas $\Relax(\softSpec_j) = (\psi_{j,1},\ldots,\psi_{j,m})$ for each $\softSpec_j$, if $\val\trans{\gphi_1 \wedge \ldots \wedge \gphi_n} = v$, then there exists an LTL formula $\psi_v$ such that $\trans \models \psi_v$ and the following holds:
		\begin{itemize}
			\item[(1)] $\psi_v = \softSpec_1'\wedge\ldots\wedge\softSpec_n'$, where $\softSpec_i' \in\Relax(\softSpec_i) \cup \{\mathit{true}\} \text{ for }i=1,\ldots,n$, 
			\item[(2)] for every $\trans'$, if $\trans' \models \psi_v$, then $\val{\trans'}{\gphi_1 \wedge \ldots \wedge \gphi_n} \geq v$.
		\end{itemize}
	\end{lemma}
	The proof of Lemma~\ref{lem:value-as-ltl-general} is analogous to the proof of Lemma~\ref{lem:value-as-ltl}. Then, with the help of Lemma~\ref{lem:value-as-ltl-general} we can establish Theorem~\ref{thm:optimal-bound-general} in the same way as Theorem~\ref{thm:optimal-bound-safety}. \qed
\end{proof}

\section{Experimental Evaluation}\label{sec:experiments}
We implemented the proposed approach to maximum realizability\footnote{The code is available at https://github.com/MahsaGhasemi/max-realizability}  in Python 2.7.  For the LTL to automata translation Spot~\cite{Duret-LutzLFMRX16} version 2.2.4 is used. MaxSAT instances are solved by Open-WBO~\cite{MartinsML14} version 2.0. We evaluated our method on instances of two examples. Each experiment was run on a machine with a 2.3 GHz Intel Xeon E5-2686 v4 processor and 16 GiB of memory. While the processor is quad-core, only a single core was used. We set a time-out of 1 hour.

\subsection{Robotic Navigation.}
We applied our method to the strategy synthesis for a robotic museum guide. The map of the museum is shown in Figure~\ref{fig:map}. The robot has to give a tour of the exhibitions in a specific order, which constitutes the hard specification. The tour starts at the entrance of the museum where the robot picks up newly arrived visitors. The main objective is to take the group through the two exhibitions on that floor and then return to the entrance to pick up a new group of people. 
Preferably, it also avoids certain locations, such as the library, or the passage when it is occupied. These preferences are encoded in the soft specifications.
In particular, on one hand, the robot can only gain access to Exhibition 2 by getting a key from the staff's office. On the other hand, the robot is asked not to disturb the employees in the office. There is a library between Exhibition 1 and Exhibition 2 which can be used to go from one to the other, but it is preferred that visitors do not enter the library. However, it is also desirable that when the other passage between these two exhibitions is occupied, the robot does not go through there. 
 
\begin{figure*}
	\centering
	\includegraphics[width=0.8\linewidth]{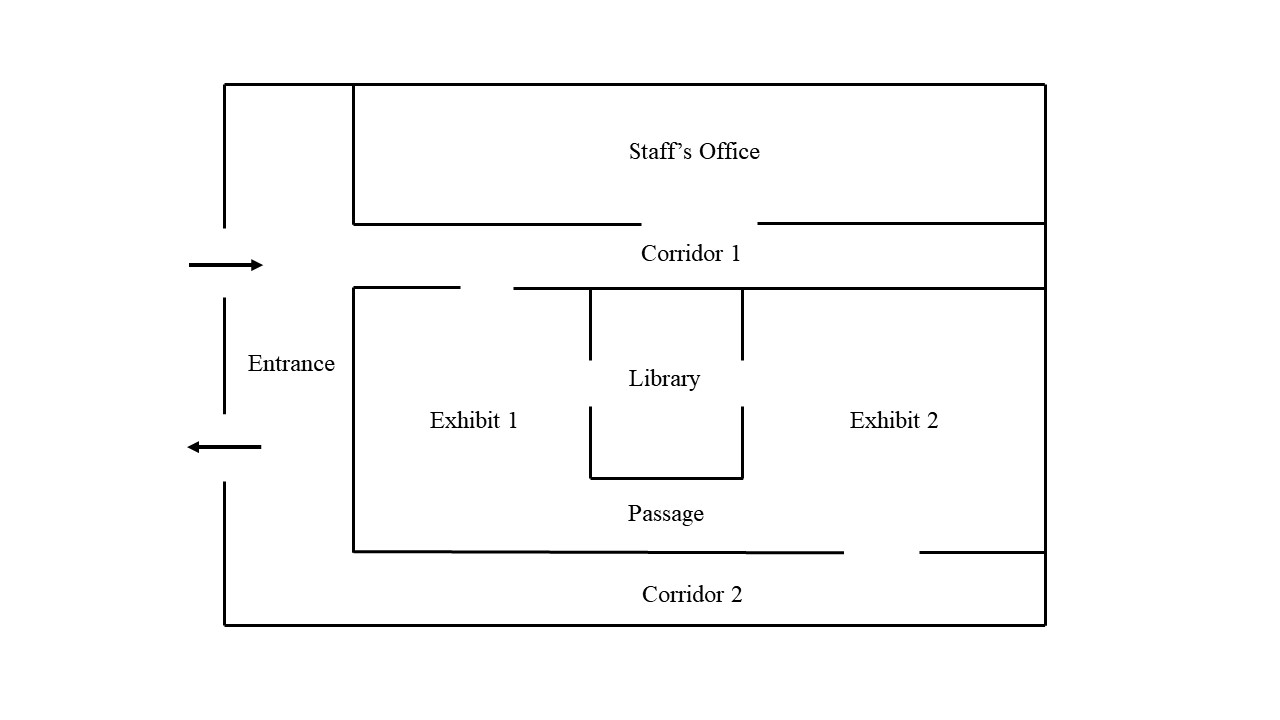}
	\caption{Map of the museum.}
	\label{fig:map}
\end{figure*} 

Clearly, these specifications cannot be realized in conjunction. Given their priorities, we categorize the requirements into hard and soft specifications, and synthesize a strategy which satisfies the hard specifications and maximizes the satisfaction of the soft specifications. We formalize the problem as follows.

\smallskip
\noindent
{\bf Input propositions:} 
The set $\inpv$ contains a single Boolean variable $\occupied$ that indicates whether the passage between the two exhibitions is occupied.

\smallskip
\noindent
{\bf Output propositions:}
The set of output propositions $\outv$ consists of eight Boolean variables corresponding to the eight locations on the map: $\ent$, $\corr_1$, $\corr_2$, $\exh_1$, $\exh_2$, $\passage$, $\office$, $\library$.

\smallskip
\noindent
{\bf The hard specification} is the conjunction of the following formulas.

\begin{itemize}
	\item The robot starts at the entrance:
	\begin{equation*}
	\ent.
	\end{equation*}
	
	\item At each time step, the robot can occupy only one location:
	\begin{equation*}
	\LTLglobally \bigwedge_{o_1 \in \outv} \left( o_1 \rightarrow \bigwedge_{o_2 \in \outv \backslash \{o_1\}} \neg o_2 \right) .
	\end{equation*}
	
	\item The admissible actions of the robot are to stay in the current location or move to an adjacent one. This leads to eight requirements describing the map. 
	For instance:
	\begin{equation*}
	\LTLglobally \left( \corr_1 \rightarrow \LTLnext \left(\corr_1 \lor \office \lor \exh_1 \right) \right) .
	\end{equation*}
\end{itemize}

{\it Remark: }Due to the requirements above,  the robot will always be in exactly one valid location, i.e., in a transition system that satisfies the specifications it is impossible to reach a state where all output variables are false.

\begin{itemize}
	
	\item The robot must infinitely often visit both exhibitions:
	\begin{equation*}
	\begin{aligned}
	& \LTLglobally \LTLfinally \exh_1 ,\\
	& \LTLglobally \LTLfinally \exh_2 .
	\end{aligned}
	\end{equation*}
	
	\item The robot has to respect the order of visits, by starting from Exhibition 1, going to Exhibition 2 and finishing at the entrance:
	\begin{equation*}
	\begin{aligned}
	& \LTLglobally \left(\exh_1 \rightarrow \LTLnext \left( \left( \neg \ent \land \neg \exh_1 \right) \LTLuntil \exh_2 \right) \right) ,\\
	& \LTLglobally \left(\exh_2 \rightarrow \LTLnext \left( \left( \neg \exh_1 \land \neg \exh_2 \right) \LTLuntil \ent \right) \right) ,\\
	& \LTLglobally \left(\ent \rightarrow \LTLnext \left( \left( \neg \exh_2 \land \neg \ent \right) \LTLuntil \exh_1 \right) \right) .
	\end{aligned}
	\end{equation*}
	
	\item The robot does not have access to Exhibition 2 before it visits the office:
	\begin{equation*}
	\neg \exh_2 \LTLuntil\office.
	\end{equation*}
	
\end{itemize}

\smallskip
\noindent
{\bf The set of soft specifications} describes the desirable requirements that the robot does not enter the office, the library, or a occupied passage. Formally:

\begin{itemize}
	\item The robot must not enter the office from corridor 1:
	\begin{equation*}
	\LTLglobally \left(\corr_1 \rightarrow \LTLnext \neg \office \right) .
	\end{equation*}
	
	\item The robot must not enter the library from the exhibitions:
	\begin{equation*}
	\LTLglobally \left( \exh_1 \lor \exh_2 \rightarrow \LTLnext \neg \library \right) .
	\end{equation*}
	
	\item The robot must not enter the passage from the exhibitions when it is occupied:
	\begin{equation*}
	\LTLglobally \left( \left( \exh_1 \lor \exh_2 \right) \land \LTLnext \occupied \rightarrow \LTLnext \neg \passage \right) .
	\end{equation*}
	
\end{itemize}

We applied the proposed method described in Section~\ref{sec:maxsat-encoding} on this example.
Table~\ref{tab:inst-robot} summarizes the results. With implementation bound of 8, the hard specification is realizable and a partial satisfaction of soft specifications is achieved. This strategy always selects the passage to transition from Exhibition 1 to Exhibition 2 and hence, avoids the library. It also violates the requirement of not entering the staff's office, to acquire access to Exhibition 2. For implementation bound 10 the solver times out. Notice that strategies with higher values exists, however, they require larger implementation size.

\begin{table*}[h!]\centering\footnotesize
	\caption{Results of applying synthesis with maximum realizability to the robotic navigation example, with different bounds on implementation size $|\trans|$. We report on the number of variables and clauses in the encoding, the satisfiability of hard constraints, the value (and bound) of the MaxSAT objective function, the running times of Spot, Open-WBO, and the time of the solver plus the time for generating the encoding.}
	\begin{tabular*}{1\linewidth}{ccccccccccc}
		\toprule
		&& \multicolumn{2}{c}{Encoding} && \multicolumn{2}{c}{Solution} && \multicolumn{3}{c}{Time (s)} \\
		\cline{3-4} \cline{6-7} \cline{9-11}
		$|\trans|$ & \phantom{} & \# vars & \# clauses & \phantom{} & sat. & $\Sigma weights$ & \phantom{} & Spot & Open-WBO & enc.+solve \\
		\midrule
		2 && 4051 & 25366 && UNSAT & 0 (39) && 0.93 & 0.011 & 0.12 \\
		4 && 19965 & 125224 && UNSAT & 0 (39) && 0.93 & 0.079 & 0.57 \\
		6 && 45897 & 289798 && UNSAT & 0 (39) && 0.93 & 1.75 & 2.9 \\
		8 && 95617 & 596430 && SAT & 31 (39) && 0.93 & 956 & 959 \\
		10 && 152949 & 954532 && SAT & - (39) && 0.93 & time-out & time-out \\
		\bottomrule
	\end{tabular*}
	\label{tab:inst-robot}
\end{table*}

\subsection{Power Distribution Network.}\looseness=-1

We consider the problem of dynamic reconfiguration of power distribution networks. A power network consists of a set $\supplies$ of power supplies (generators) and a set $\loads$ of loads (consumers). The network is a bipartite graph with edges between supplies and loads, where each supply is connected to multiple loads and each load is connected to multiple supplies. Each power supply has an associated capacity, which determines how many loads it can power at a given time. 
For each supply $p \in \supplies$, we denote with $\caps_p$ the capacity of $p$, that is, how many loads $p$ can power, and with $\cons(p)$ the set of loads connected to $p$ in the network graph. Similarly, for a load $l \in \loads$, let $\supl(l)$ be the set of suppliers to which the load $l$ is connected in the network.
It is possible that not all loads can be powered all the time. Some loads are critical and must be powered continuously (hard specification), while others are not and should be powered when possible (soft specification). Some loads can be initializing, meaning they must be powered only initially for several steps. Power supplies can become faulty during operation, which necessitates dynamic network reconfiguration. The number of supplies that can be simultaneously faulty is upper bounded by a constant $f$. Further, we add a soft specification to limit the frequency of switching the relays between power supplies and loads. We model the problem as follows.

\smallskip
\noindent
{\bf Input propositions:} The set $\inpv$ consists of input propositions which form the binary encoding of $f$ integer variables $e_1, e_2, \ldots, e_f$ each with domain $\{0, \ldots, |\loads|\}$. The values of these variables indicate which power supplies are faulty at a given point in time: if $e_i = p$ for some $i$ and $p$, then $p$ is faulty at that time.\footnote{There can be different indices $i$ for which $e_i = p$ at the same time. While this will be redundant, it does not affect the encoding.}

\smallskip
\noindent
{\bf Output propositions:}
The set of output propositions $\outv$ consists of the Boolean variables $\slp$ for all loads $l \in \loads$ and  supplies $p \in \supplies$ where $l$ and $p$ are connected. The meaning of $\slp$ being true is that $l$ is powered by $p$.

\smallskip
\noindent
{\bf The hard specification} is the conjunction of the following formulas.

\begin{itemize}
	\item A critical load must always be powered:
	\begin{equation*}
	\LTLglobally \left( \bigvee_{p \in \supl(l)} \slp \right) \,, 
	\quad \forall \ l \in \loads \ \text{where } l \text{ is critical}.
	\end{equation*}
	
	\item An initializing node must be powered during the first two steps:
	\begin{equation*}
	\left(\bigvee_{p \in \supl(l)} \slp\right) \land\,\,
	\LTLnext \left(\bigvee_{p \in \supl(l)} \slp\right)
	\,, \quad \forall \ l \in \loads \ \text{where } l \text{ is initializing}.
	\end{equation*}
	
	\item A load must only be assigned to one power supply:
	\begin{equation*}
	\bigwedge_{l \in \loads}\,\,\,
	\bigwedge_{p_1 \in \supl(l)}
	\LTLglobally \left(s_{l \rightarrow p_1} 
	\rightarrow \bigwedge_{p_2 \in \supl(l), p_2 \neq p_1}\neg s_{l \rightarrow p_2}\right).
	\end{equation*}
	
	\item The capacity of power supplies must not be exceeded:
	\begin{equation*}
	\bigwedge_{p \in \supplies} 
	\bigwedge_{\substack{{L' \subseteq \cons(p)} \\ {|L'| = \caps_p}}}
	\LTLglobally \left( \left(\bigwedge_{l \in L'} s_{l \rightarrow p}\right) \rightarrow
	\bigwedge_{l \in \cons(p) \setminus L'} \neg s_{l \rightarrow p} \right).
	\end{equation*}
	
	\item When a power supply becomes faulty, no loads can be powered by it:
	\begin{equation*}
	\bigwedge_{i \in \{1, 2, \ldots f\}}
	\bigwedge_{p \in \supplies}
	\LTLglobally \left( e_i = p \rightarrow 
	\bigwedge_{l \in \cons(p)} \neg \slp \right).
	\end{equation*}
	
\end{itemize}

\smallskip
\noindent
{\bf The set of soft specifications} consists of the requirements for powering the non-vital loads, and optionally, a restriction on switching supplies too often unless they become faulty. The respective formulas are given below.

\begin{itemize}
	\item A non-critical load should always be powered:\begin{equation*}
	\LTLglobally \left( \bigvee_{p \in \supl(l)} \slp \right) \,, 
	\quad \forall \ l \in \loads \ \text{where } l \text{ is non-critical}.
	\end{equation*}

	\item All loads powered by a supply remain powered by it unless it becomes faulty:
	\begin{equation*}
	\bigwedge_{l \in \loads}\,\,
	\bigwedge_{p \in \supl(l)}
	\LTLglobally \left( 
	\slp \land \LTLnext \left( \neg 
	\bigvee_{i \in \{1, 2, \ldots, f\}} e_i = l \right)
	\rightarrow \LTLnext \slp \right). 
	\end{equation*}
	
\end{itemize}

\begin{table*}[]\centering\footnotesize
\caption{Power distribution network instances. 
An instance is determined by the number supplies $|\supplies|$, the number of loads $|\loads|$, the capacity of supplies $\caps$, the number of critical, non-critical and initializing loads. We also show the number of input $|\inpv|$ and output $|\outv|$ propositions and the number of soft specifications.}
\begin{tabular*}{1\linewidth}{@{}c|cccccccccccc@{}}
\toprule
& &\multicolumn{3}{c}{Network} &&\multicolumn{3}{c}{Load characterization} && \multicolumn{3}{c}{Specifications} \\
\cline{3-5} \cline{7-9} \cline{11-13}
& \begin{tabular}{c@{}} Instance \\ \# \end{tabular} & $|\supplies|$ & $|\loads|$ & $\caps$ & \phantom{} & crit. & non-crit. & init. & \phantom{} & $|\inpv|$ & $|\outv|$ & \begin{tabular}{@{}c@{}} \# Soft \\ \; spec. \end{tabular}\\
\midrule
fully     & 1 & 3 & 3 &  1 && 1 & 2 & 0 &&  2 & 9 & 2 \\
connected, & 2 & 3 & 6 &  2 && 2 & 4 & 0 &&  2 & 18 & 4 \\
switching & 3 & 3 & 3 &  1 && 0 & 2 & 1 &&  2 & 9 & 2 \\
allowed & 4 & 3 & 6 &  2 && 1 & 4 & 1 &&  2 & 18 & 4 \\
\midrule
sparse,&5 & 4 & 2 &  1 && 1 & 1 & 0 &&  3 & 4 & 1 \\
switching&6 & 4 & 4 &  1 && 1 & 3 & 0 &&  3 & 8 & 3 \\
allowed&7 & 4 & 6 &  1 && 1 & 5 & 0 &&  3 & 12 & 5 \\
&8 & 4 & 8 &  1 && 1 & 7 & 0 &&  3 & 16 & 7 \\
\midrule
sparse,&9 & 4 & 2 &  1 && 1 & 1 & 0 &&  3 & 4 & 5 \\
switching&10 & 4 & 4 &  1 && 1 & 3 & 0 &&  3 & 8 & 11 \\
restricted&11 & 4 & 6 &  1 && 1 & 5 & 0 &&  3 & 12 & 17 \\
&12 & 4 & 8 &  1 && 1 & 7 & 0 &&  3 & 16 & 23 \\
\bottomrule
\end{tabular*}
\label{tab:def-inst}
\end{table*}

We applied our method to the problem of synthesizing a relay-switching strategy from the above LTL specifications. 
Table~\ref{tab:def-inst} describes the instances to which we applied our synthesis method. Power supplies have the same capacity $\caps$ (number of loads they can power) and at most one can be faulty at each time. We consider three categories of instances, depending on the network connectivity (full or sparse), and whether we restrict frequent switching of supplies.

\setlength\extrarowheight{-0.5pt}
\begin{table*}[h!]\centering\scriptsize
	\caption{Results of applying synthesis with maximum realizability on the instances in Table~\ref{tab:def-inst}, with different bounds on implementation size $|\trans|$. We report on the number of variables and clauses in the encoding, the value (and bound) of the objective function in the MaxSAT instance,  the running times of Spot, Open-WBO, and the time of the solver plus the time for generating the encoding.}
	\begin{tabular*}{1\linewidth}{@{}c|ccccccccccc@{}}
		\toprule
		&& && \multicolumn{2}{c}{Encoding} && \multicolumn{1}{c}{Solution} && \multicolumn{3}{c}{Time (s)} \\
		\cline{5-6} \cline{8-8} \cline{10-12}
		\begin{tabular}{@{}c@{}} Instance \\ \# \end{tabular} & \phantom{} & $|\trans|$ & \phantom{} & \# vars & \# clauses & \phantom{} & $\Sigma weights$ & \phantom{} & Spot & Open-WBO & enc.+solve \\
		\midrule
		1 && 2 && 246 & 3183 && 8 (14) && 0.23 & 0.0050 & 0.044 \\
		  && 4 && 1038 & 18059 && 8 (14) && 0.23 & 0.046 & 0.16 \\
		  && 6 && 2862 & 52999 && 8 (14) && 0.23 & 2.9 & 3.2 \\
		  && 8 && 4838 & 94079 && 8 (14) && 0.23 & 2305 & 2306 \\
		\midrule
		2 && 2 && 452 & 13429 && 64 (84) && 91 & 0.015 & 0.15 \\
		  && 4 && 1860 & 77125 && 69 (84) && 91 & 0.15 & 0.66 \\
		  && 6 && 5100 & 226933 && 69 (84) && 91 & 8.5 & 9.9 \\
		  && 8 && 8588 & 403165 && N/A (84) && 91 & time-out & time-out \\
		\midrule
		3 && 2 && 302 & 7567 && 8 (14) && 0.23 & 0.0067 & 0.069 \\
		  && 4 && 1446 & 42891 && 13 (14) && 0.23 & 0.021 & 0.25 \\
		  && 6 && 4206 & 125287 && 13 (14) && 0.23 & 0.11 & 0.73 \\
		  && 8 && 7206 & 222591 && 13 (14) && 0.23 & 1.1 & 2.5 \\
		\midrule
		4 && 2 && 508 & 38165 && 64 (84) && 210 & 0.028 & 0.34 \\
		  && 4 && 2268 & 219589 && 74 (84) && 210 & 0.21 & 1.5 \\
		  && 6 && 6444 & 645397 && 74 (84) && 210 & 2.7 & 6.4 \\
		  && 8 && 10956 & 1147101 && N/A (84) && 210 & time-out & time-out \\
		\midrule\midrule
		5 && 2 && 203 & 2476 && 3 (3) && 0.058 & 0.017 & 0.039 \\
		  && 4 && 779 & 14126 && 3 (3) && 0.058 & 0.24 & 0.31 \\
		  && 6 && 2019 & 41584 && 3 (3) && 0.058 & 1.4 & 1.6 \\
		  && 8 && 3395 & 73808 && 3 (3) && 0.058 & 4.0 & 4.3 \\
		\midrule
		6 && 2 && 433 & 6722 && 31 (39) && 0.17 & 0.0069 & 0.066 \\
		  && 4 && 1649 & 38472 && 31 (39) && 0.17 & 0.076 & 0.29 \\
		  && 6 && 4329 & 113422 && 31 (39) && 0.17 & 4.6 & 5.2 \\
		  && 8 && 7241 & 201326 && N/A (39) && 0.17 & time-out & time-out \\
		\midrule
		7 && 2 && 663 & 13464 && 100 (155) && 1.3 & 0.011 & 0.12 \\
		  && 4 && 2519 & 77538 && 106 (155) && 1.3 & 0.11 & 0.54 \\
		  && 6 && 6639 & 229036 && 106 (155) && 1.3 & 6.3 & 7.4 \\
		  && 8 && 11087 & 406668 && N/A (155) && 1.3 & time-out & time-out \\
		\midrule
		8 && 2 && 893 & 29070 && 196 (399) && 30 & 0.019 & 0.26 \\
		  && 4 && 3389 & 169596 && 294 (399) && 30 & 0.47 & 1.5 \\
		  && 6 && 8949 & 503338 && 294 (399) && 30 & 62 & 65 \\
		  && 8 && 14933 & 894122 && N/A (399) && 30 & time-out & time-out \\
		\midrule\midrule
		9 && 2 && 631 & 7352 && 131 (155) && 0.21 & 0.0069 & 0.10 \\
		  && 4 && 2647 & 43106 && 131 (155) && 0.21 & 0.15 & 0.44 \\
		  && 6 && 7311 & 129100 && 131 (155) && 0.21 & 71 & 71 \\
		  && 8 && 12367 & 229004 && N/A (155) && 0.21 & time-out & time-out \\
		\midrule
		10 && 2 && 1289 & 16474 && 1343 (1463) && 0.44 & 0.012 & 0.21 \\
		   && 4 && 5385 & 96432 && 1343 (1463) && 0.44 & 1.1 & 1.7 \\
		   && 6 && 14913 & 288454 && 1343 (1463) && 0.44 & 3579 & 3581 \\
		   && 8 && 25185 & 511718 && N/A (1463) && 0.44 & time-out & time-out \\
		\midrule
		11 && 2 && 1947 & 28092 && 4660 (5219) && 1.9 & 0.021 & 0.35 \\
		   && 4 && 8123 & 164478 && 4678 (5219) && 1.9 & 1.7 & 2.7 \\
		   && 6 && 22515 & 491584 && N/A (5219) && 1.9 & time-out & time-out \\
		   && 8 && 38003 & 872256 && N/A (5219) && 1.9 & time-out & time-out \\
		\midrule
		12 && 2 && 2605 & 48574 && 10724 (12719) && 28 & 0.056 & 0.61 \\
		   && 4 && 10861 & 285516 && 11686 (12719) && 28 & 1.7 & 3.5 \\
		   && 6 && 30117 & 853402 && N/A (12719) && 28 & time-out & time-out \\
		   && 8 && 50821 & 1514906 && N/A (12719) && 28 & time-out & time-out \\
		\bottomrule
	\end{tabular*}
	\label{tab:inst-stat}
\end{table*}

\begin{figure}[]
	\centering
	\begin{subfigure}{.7\textwidth}
		\centering
		\includegraphics[width=1\textwidth]{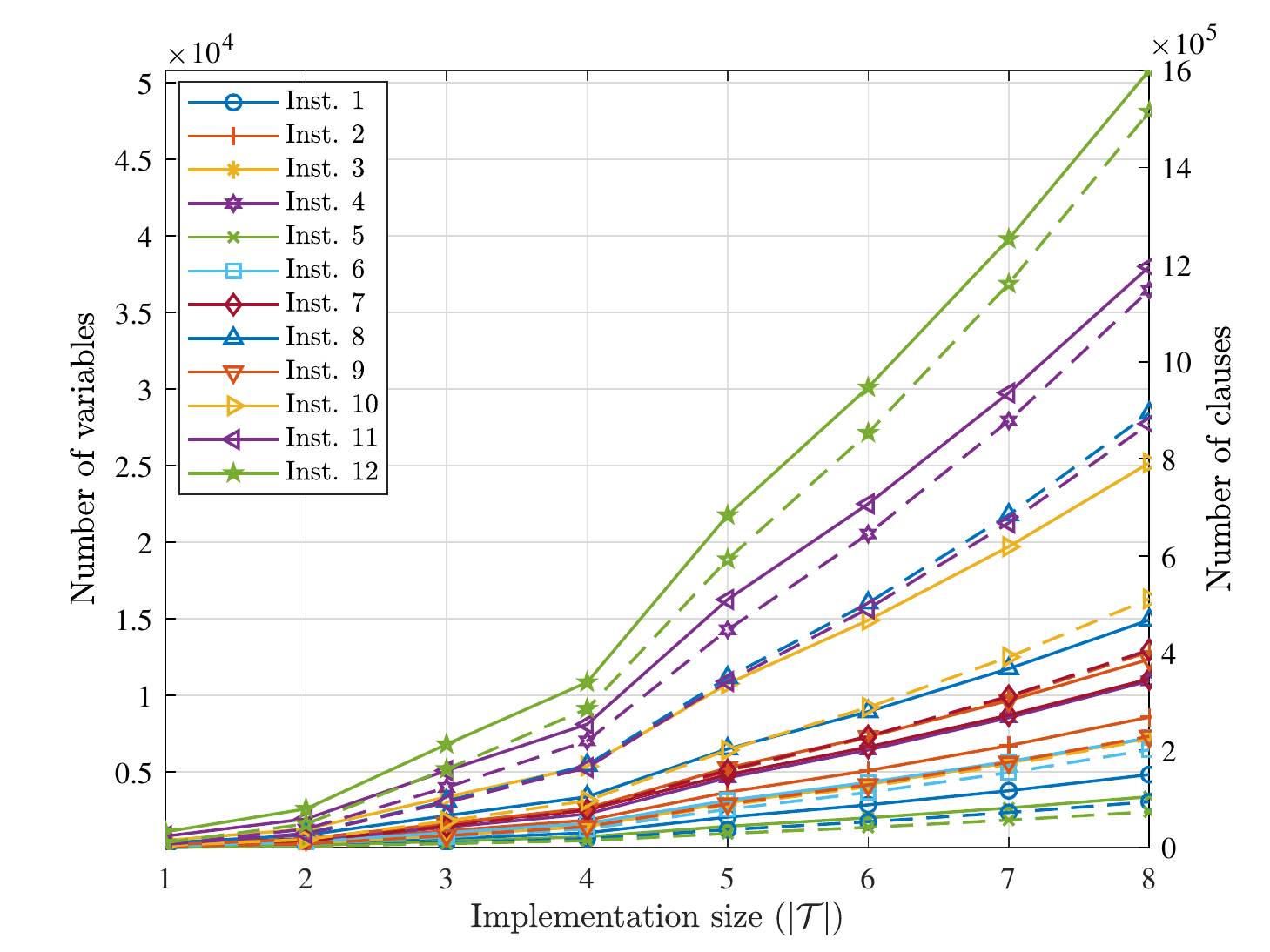}\caption{\scriptsize Encoding size}
	\end{subfigure}
	\begin{subfigure}{.7\textwidth}
		\centering
		\includegraphics[width=1\textwidth]{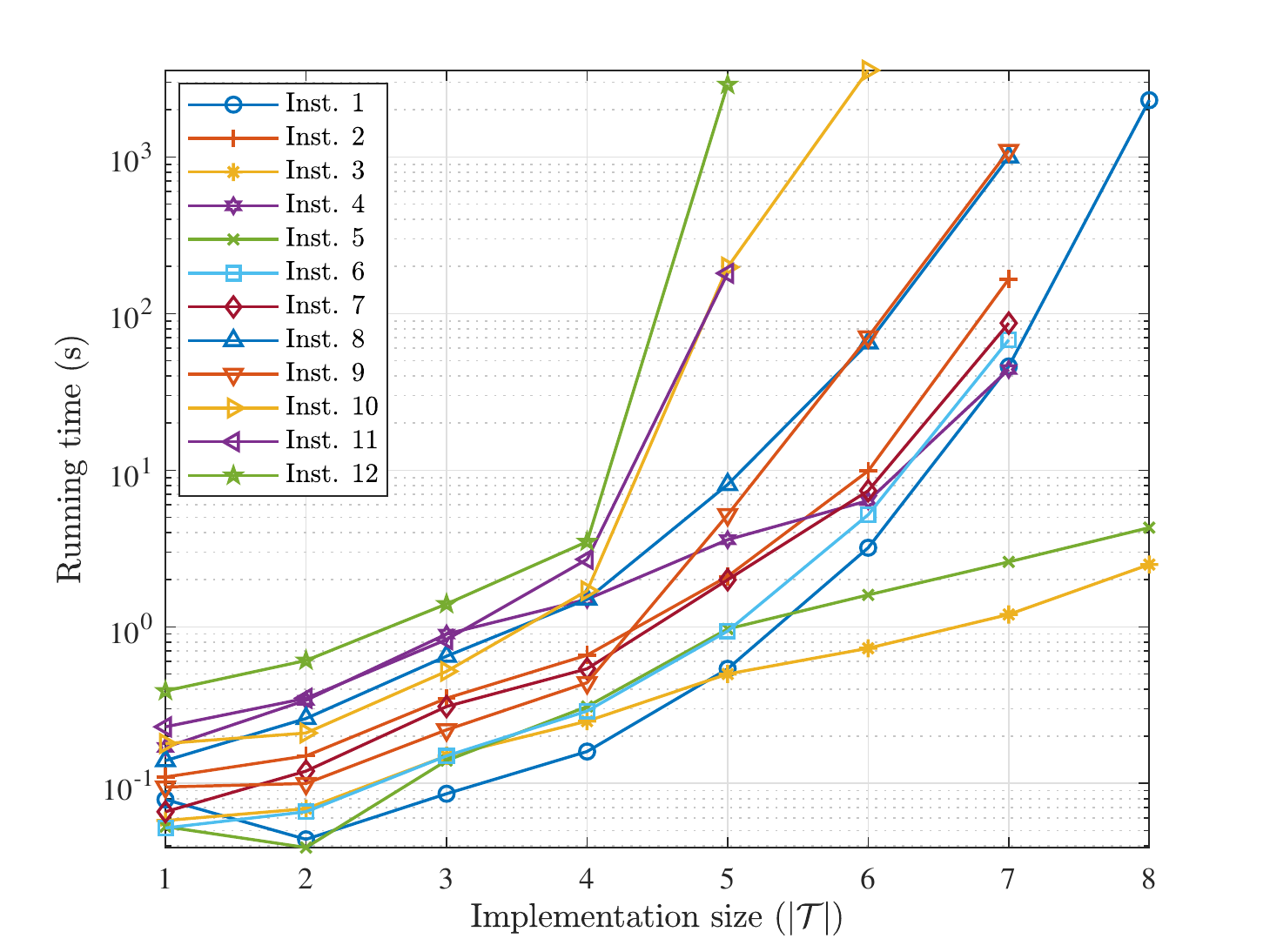}\caption{\scriptsize Running time}
	\end{subfigure}
	\begin{subfigure}{.7\textwidth}
	\centering
	\includegraphics[width=1\textwidth]{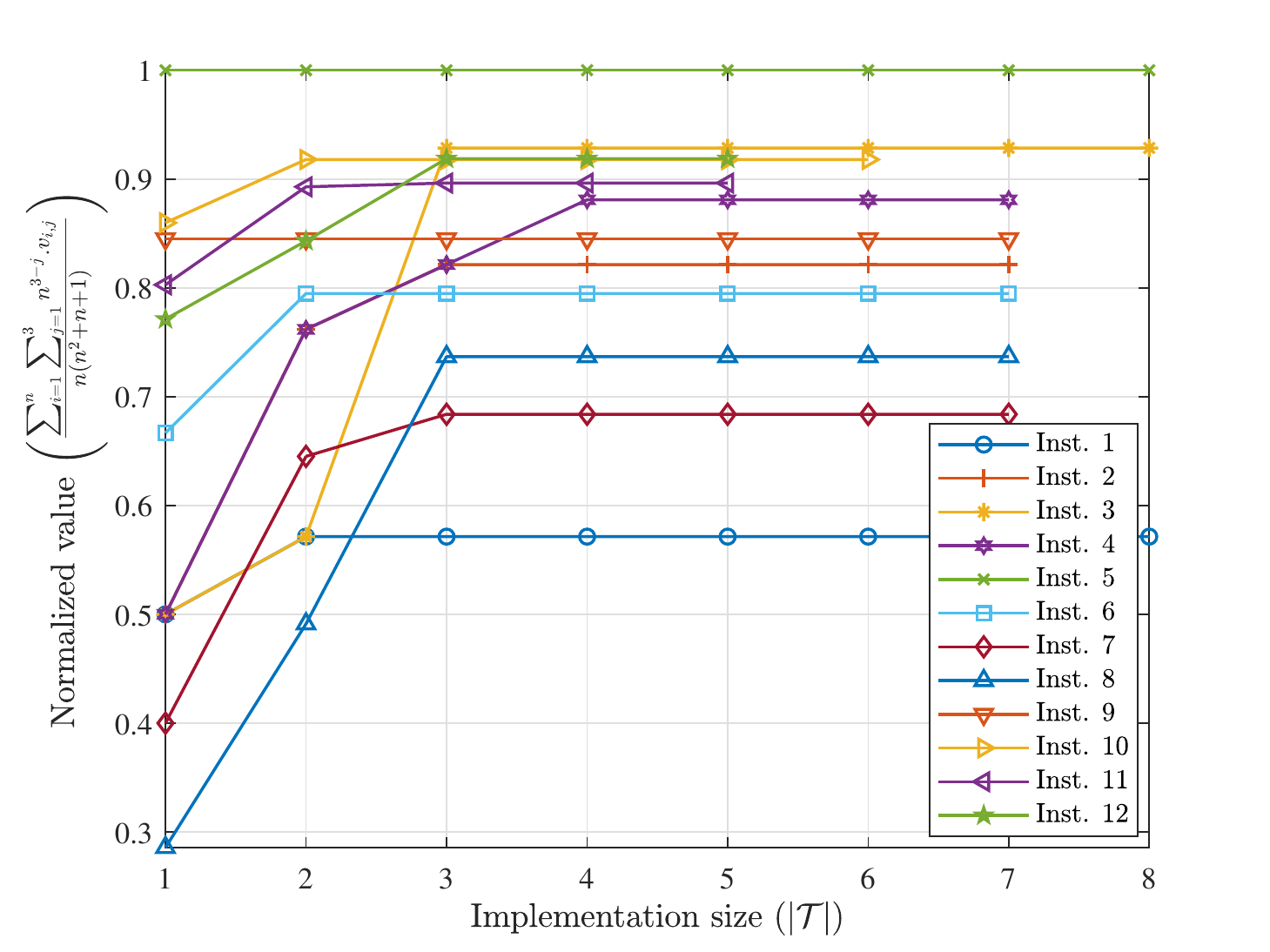}\caption{\scriptsize Normalized value}
    \end{subfigure}
	\caption{Results of applying our method to the instances in Table~\ref{tab:def-inst}, with different bounds on implementation size $|\trans|$. (a) shows the size of the MaxSAT encoding as the number of variables (solid lines) and the number of clauses (dashed lines). (b) shows the running time of the MaxSAT solver plus the time for the encoding. (c) shows the level of realizability of soft specifications.}
	\label{fig:inst-stat}
\end{figure}

In Figure~\ref{fig:inst-stat}, we show the results for the instances defined in Table~\ref{tab:def-inst} (detailed results are reported in Table~\ref{tab:inst-stat}). As expected, the value function is monotonically nondecreasing with respect to the bound on the implementation size. In the first set of instances, the specifications have large number of variables (due to full connectivity), and the bottleneck is the translation to automata.
In the third set of instances, the limiting factor is the number of soft specifications, leading to large weights and number of variables in the MaxSAT formulation.
We observe that the number of soft specifications is an important factor affecting the scalability of the proposed method. For example, Instance 12, on which the MaxSAT solver reaches time-out for implementation size bound 6 contains 23 soft specifications.

\section{Conclusion and Future Work}\label{sec:conclusion}
In this paper, we considered settings in which a system's requirements are categorized as hard and soft linear temporal logic (LTL) specifications and the goal is to design a controller that satisfies the hard specifications while maximally realizing the soft specifications. To that end, we introduced relaxations of soft LTL formulas and accordingly defined a value function that captures the level of realizing a conjunction of soft LTL formulas. We further constructed a MaxSAT encoding of maximum realizability that aims to maximize the value function. By incrementing the size of the implementation and generating the induced MaxSAT encodings, we developed a bounded synthesis procedure to find a controller with smallest size that meets a termination criterion. We computed a theoretical bound on the size of the implementation and proved soundness and completeness of the synthesis algorithm. Additionally, we discussed multiple generalizations of the proposed method and provided experimental results in multiple scenarios.

As part of future work, we plan to employ the proposed algorithm to construct controllers from a combination of temporal logic specifications and data in the form of sample demonstration of desired system behavior. In such settings, the system is asked to imitate the sample demonstrations as much as possible while satisfying the given specifications. We are also considering to design a customized search procedure for solving MaxSAT instances generated for maximum realizability that benefits from the knowledge on the specific structure of soft clauses.

\clearpage
\begin{acknowledgements}
This work was supported in part by AFRL grants UTC 17-S8401-10-C1 and FA8650-15-C-2546, and ONR grant N000141613165.
\end{acknowledgements}

\bibliographystyle{ieeetr}
\bibliography{main.bib}

\end{document}